\DeclareMathOperator*{\argmin}{arg\,min}
\newcolumntype{L}[1]{>{\raggedright\let\newline\\\arraybackslash\hspace{0pt}}m{#1}}
\newcolumntype{C}[1]{>{\centering\let\newline\\\arraybackslash\hspace{0pt}}m{#1}}
\newcolumntype{R}[1]{>{\raggedleft\let\newline\\\arraybackslash\hspace{0pt}}m{#1}}
\begin{document}
%


\theoremstyle{definition}
\newtheorem{definition}{Definition}[section]

\theoremstyle{theorem}
\newtheorem{theorem}{Theorem}[section]


\title{Compacting Frequent Star Patterns in RDF Graphs}

\author{%
\textsc{Farah Karim} \\[1ex] 
\normalsize Leibniz University of Hannover, Germany \\Mirpur University of Science and Technology (MUST), Mirpur-10250 (AJK), Pakistan \\ 
\normalsize \href{mailto:Karim@l3s.de}{Karim@l3s.de} 
\and 
\textsc{Maria{-}Esther Vidal} \\[1ex] 
\normalsize TIB Leibniz Information Centre for Science and Technology, Hannover\\
Leibniz University of Hannover, Germany  \\ 
\normalsize \href{mailto:Maria.Vidal@tib.eu}{Maria.Vidal@tib.eu} 
\and
\textsc{S{\"o}ren Auer} \\[1ex] 
\normalsize TIB Leibniz Information Centre for Science and Technology, Hannover\\
Leibniz University of Hannover, Germany  \\ 
\normalsize \href{mailto:Auer@tib.eu}{Auer@tib.eu} 
}

\maketitle              
\begin{abstract}
Knowledge graphs have become a popular formalism for representing entities and their properties using a graph data model, e.g., the Resource Description Framework (RDF).  
An RDF graph comprises entities of the same type connected to objects or other entities using labeled edges annotated with properties. 
RDF graphs usually contain entities that share the same objects in a certain group of properties, i.e., they match star patterns  composed of these properties and objects. In case the number of these entities or properties in these star patterns is large, the size of the RDF graph and query processing are negatively impacted; we refer these star patterns as \emph{frequent star patterns}.
We address the problem of identifying \emph{frequent star patterns} in RDF graphs and devise the concept of \emph{factorized RDF graphs}, which denote compact representations of RDF graphs where the number of frequent star patterns is minimized. 
We also develop computational methods to identify frequent star patterns and generate a \emph{factorized RDF graph}, where \emph{compact RDF molecules} replace frequent star patterns.
A compact RDF molecule of a frequent star pattern denotes an RDF subgraph that instantiates the corresponding star pattern. Instead of having all the entities matching the original frequent star pattern, a \emph{surrogate} entity is added and related to the properties of the frequent star pattern; it is linked to the entities that originally match the frequent star pattern. Since the edges between the entities and the objects in the frequent star pattern are replaced by edges between these entities and the surrogate entity of the compact RDF molecule, the size of the RDF graph is reduced. We evaluate the performance of our factorization techniques on several RDF graph benchmarks and compare with a baseline built on top of \textit{gSpan}, a state-of-the-art algorithm to detect frequent patterns. The outcomes evidence the efficiency of proposed approach and show that  our techniques are able to reduce execution time of the baseline approach in at least three orders of magnitude. Additionally, RDF graph size can be reduced by up to $66.56\%$ while data represented in the original RDF graph is preserved. 
\end{abstract}

{\bf Keywords:} Semantic Web, RDF Compaction, Linked Data, Knowledge Graph.

\section{Introduction}


Knowledge graphs have gained momentum as flexible and expressive structures for representing not only data and knowledge but also actionable insights~\cite{vidaltransforming}; they provide the basis for effective and intelligent applications. Currently, knowledge graphs are utilized in diverse domains e.g., DBpedia~\cite{lehmann2015dbpedia}, Google Knowledge Graph~\cite{singhal2012introducing}, and KnowLife~\cite{ernst2015knowlife}.  The Resource Description Framework (RDF)~\cite{lassila1998resource} has been adopted as a formalism to represent knowledge graphs; in fact, in the Linked Open Data cloud~\cite{bizer2011linked}, there are in 2019 more than 1,200 RDF knowledge graphs available\footnote{https://lod-cloud.net/}. RDF models knowledge in the form of graphs where nodes represent entities; connections between entity nodes are representing RDF triples composed of subject, property, and object. The subjects and objects are represented by nodes, and an edge represents a property that relates a subject with an object. Diverse applications have been developed on top of knowledge graphs~\cite{AuerKPKSV18,Grangel-Gonzalez18,vidaltransforming}. However, the adoption of knowledge graphs as \emph{de facto data} structure of real-world applications demands efficient representations and scalable techniques for creating, managing, and answering queries over knowledge graphs. Thus, efficient  graph representations of real-world scenarios are still demanded to enhance and facilitate the development of applications over knowledge graphs.

In real-world applications, a group of entities can share the same values in a set of features. For example, several sensor observations can sense the same temperature, in a given timestamp and city.  This situation can be represented in an RDF graph with four triples per sensor observation $o_i$, i.e., $(o_i\; temperature\; t)$,  $(o_i\; unit\;  u)$, $(o_i\; timestamp\; ts)$, and $(o_i\; gps\_coordinates\; gc)$. All the resources representing these sensor observations match the variable $?o$ in the star pattern (SGP) composed by the conjunction of the following triple patterns $(?o\; tem-$ $perature\; t)$ $(?o\; unit\; u)$, $(?o \;timestamp\; ts)$, and $(?o\;$ $gps\_coordinates\; gc)$~\cite{prud2011sparql}. In case the star patterns are instantiated with many entities, a large number of RDF triples will have the same properties and objects and the corresponding star pattern will be repeatedly instantiated; we name these star patterns \emph{frequent star patterns}. Although RDF triples that instantiate a frequent star pattern correctly model the real world, the size of the knowledge graph as well as the efficiency of the tasks of management and processing, can be negatively affected whenever a large number of triples of frequent star patterns populate the knowledge graph. Since frequent star patterns are very common in real world knowledge graphs, techniques are required to enable both the efficient representation of the knowledge encoded in these star patterns, as well as the processing and traversal of the represented knowledge. 

The Database and Semantic Web communities have addressed the problem of representing relational and graph data models; they have proposed a variety of representation methods and data structures that take into account the main features of a relational or graph model with the aim of speeding up relation and graph based analytics~\cite{abadi2006integrating,allen2019understanding,alvarez2011compressed,fernandez2013binary,joshi2013logical,karim2017large,meier2008towards,pichler2010redundancy,zukowski2006super}. Compression techniques~\cite{abadi2006integrating,zukowski2006super} over the column-oriented databases~\cite{boncz2005monetdb,stonebraker2005c}, use the decomposition storage model~\cite{copeland1985decomposition} to maintain data, where each attribute value and a surrogate key, from the conceptual schema, are stored in a binary relation. However, a relation stored using the decomposition storage model cannot easily exploit compression unless surrogate keys are repeated~\cite{copeland1985decomposition}. Further, the decomposition model stores two copies of a binary relation, also the surrogate keys are required to be stored repeatedly for each attribute causing an increase in the storage space requirements. In the context of RDF graph, the scientific community has also actively contributed; approaches like~\cite{alvarez2011compressed,fernandez2013binary,pan2014graph,zhu2018predicate} generate compact binary representations for RDF knowledge graphs. RDF binary compression techniques do not take into account the semantics encoded in knowledge graphs; they require customized engines to perform query processing. Moreover, there have been defined compression approaches for RDF graphs able to exploit semantics encoded in RDF triples. Approaches~\cite{meier2008towards,pichler2010redundancy} are application dependent and require a user to input the compression rules and constraints. Alternatively, compression approaches tailored for ontology properties~\cite{karim2017large} have shown to be effective, but they require prior knowledge of classes and properties involved in repeated graph patterns to generate compact representations. Lastly,  techniques proposed by Joshi et al.~\cite{joshi2013logical} require decompression to access and process the original data, as well as extra processing over the data. Albeit effective in reducing the storage space, existing compression methods add overhead to the process of data management, and particularly, query execution time can be negatively impacted. gSpan~\cite{yan2002gspan} and GRAMI~\cite{elseidy2014grami} are state-of-the-art algorithms that aim to identify frequent patterns. However, only patterns with constants are considered and they are neither able to identify star patterns nor decide \textit{frequentness}. We have built an exhaustive algorithm that resorts to the gSpan enumeration of frequent patterns to identify the frequent star patterns in an RDF knowledge graph; this approach corresponds to the baseline of our empirical evaluation.  

\noindent
\textbf{Our Research Goal:}
We address the problem of \emph{identifying frequent star patterns} in RDF knowledge graphs, where certain properties and their corresponding objects are repeatedly shared by several entities of a type causing unnecessary growth of the knowledge graphs. Our research goal is to minimize the number of \emph{frequent star patterns} in RDF knowledge graphs to generate compact representations without losing any information.
We investigate the following research questions:
\begin{itemize}
    \item What are the criteria that characterize frequent star patterns?
    \item Do compact graph representations impact on the size of knowledge graphs?
\end{itemize}

 \begin{figure*}[t!]
\label{motivationExp}
\centering
     \vspace{0pt}\subfloat[An RDF Graph $G$]{
      \includegraphics[width=0.34\textwidth]{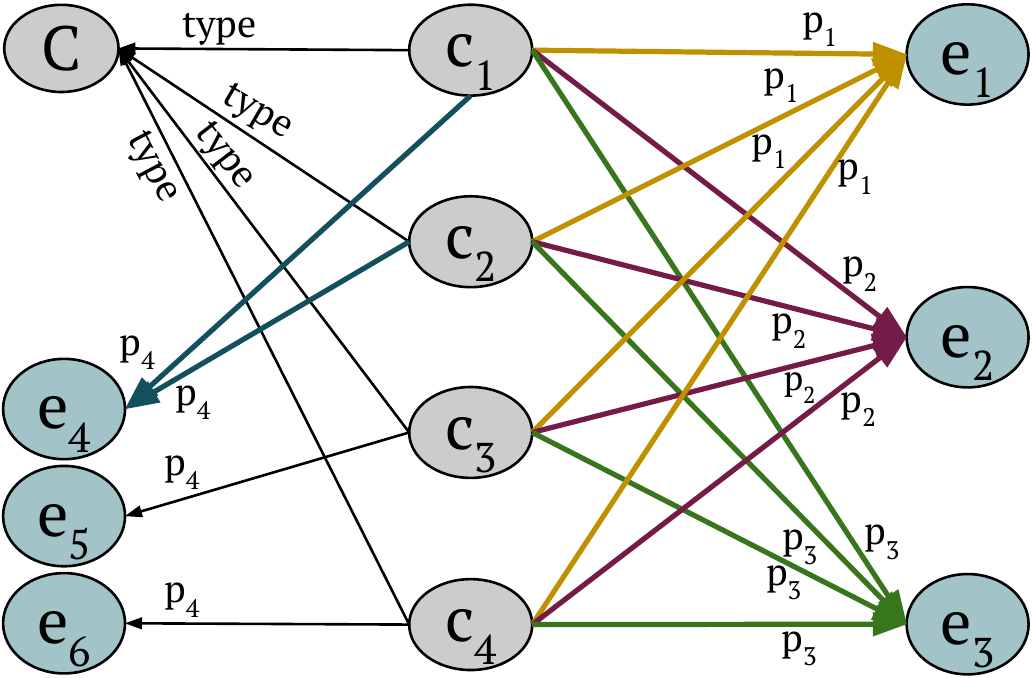}
      \label{fig:duplicates}}
       \vspace{-0.50pt}
       \vspace{0pt}\subfloat[Entities in the Graph Pattern]{
      \includegraphics[width=0.34\textwidth]{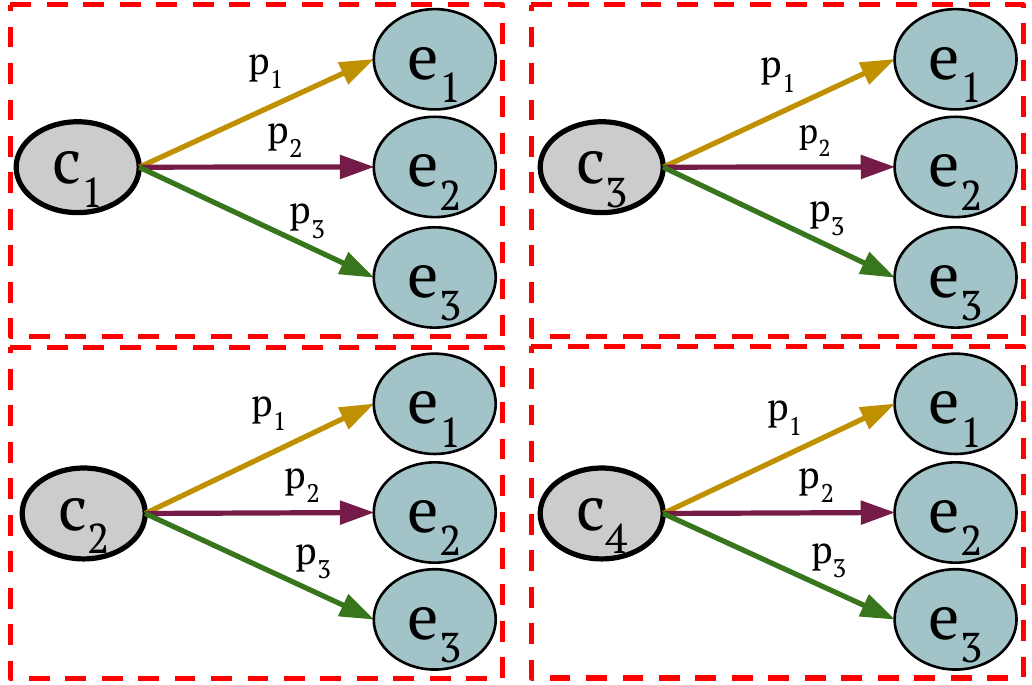}
      \label{fig:fgp}}
       \vspace{-0.50pt}
  \vspace{0pt}\subfloat[A Star Pattern]{
      \includegraphics[width=0.25\textwidth]{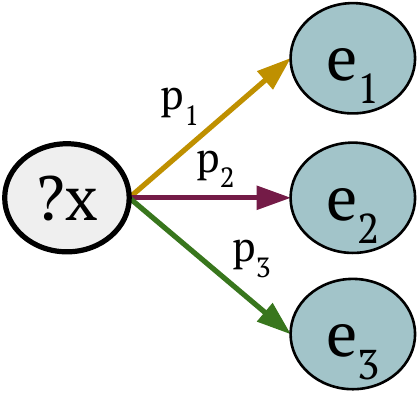}
      \label{fig:redundancy}}
   \caption{{\bf Motivating Example}. Frequent star pattern. (a) RDF graph with classes, entities, and properties; (b) Entities $c_1$, $c_2$, $c_3$, and $c_4$ are related to $e_1$, $e_2$, and $e_3$ with properties $p_1$, $p_2$, and $p_3$, respectively; (c) A star pattern with subject variable $?x$, respectively, relates $e_1$, $e_2$, and $e_3$ with properties $p_1$, $p_2$, and $p_3$.}
\end{figure*}

\noindent\textbf{Approach:} 
We devise the concept of \emph{factorized RDF graphs}, which corresponds to a compact graph with a minimized number of frequent star patterns. 
Further, we develop computational methods to detect frequent star patterns in RDF graphs and to generate a \emph{factorized RDF graph}. 
These methods are able to identify entities and properties in frequent star patterns in RDF graphs, and generate factorized RDF graphs by representing frequent star patterns with compact RDF molecules. A compact RDF molecule of a frequent star pattern is an RDF subgraph that instantiates the star pattern; a surrogate entity stands for the entities that satisfy the corresponding frequent star pattern. The surrogate entity is linked to the properties and the corresponding objects in the frequent star pattern (see Figure~\ref{fig:elemnetsNumber}). The entities, initially matching the frequent star pattern, are also linked to the surrogate entity of the compact RDF molecule. Compact RDF molecules significantly reduce the size of the RDF graph by replacing labeled edges and entities connected the objects in the frequent star pattern, with edges linking the entities to the surrogate entity of a compact RDF molecule. 
We study the effectiveness of our factorization techniques over the \emph{LinkedSensorData} benchmark~\cite{patni2010linked}; it describes more than 34,000,000 weather observations collected by around 20,000 weather stations in the United States since 2002. Experiments are conducted against three \emph{LinkedSensorData} RDF graphs by gradually increasing the graph size. The observed results evidence that frequent star patterns characterize the best set of properties relating several entities of a class to the same objects in an RDF graph. Moreover, our techniques reduce RDF graphs size by up to $66.56\%$ using properties and classes recommended by the frequent star patterns detection approach. 

\noindent\textbf{Contributions:} we devise computational methods for factorizing RDF graphs. The specific contributions are as follows:
\begin{inparaenum}[\bf i\upshape)]
	\item Criteria for detecting frequent star patterns; 
	\item Factorization techniques compacting frequent star patterns in
RDF graphs. We have presented two algorithms: An exhaustive approach (named E.FSP) searches the space of frequent patterns produced by an algorithm like gSpan, to identify frequent star patterns. Further, G.FSP implements a Greedy meta-heuristics that is able to traverse the space of star patterns and identify the ones that are frequent. Star patterns are traversed in iterations, starting with the star patterns with the largest number of properties. The criteria of frequent star patterns correspond the stop criteria of the algorithm. 
\item An empirical study of both the frequent star patterns detection and factorization techniques using existing benchmarks. Experimental results show that both E.FSP and G.FSP identify frequent star patterns. Moreover, G.FSP overcomes E.FSP by reducing execution time in at least three orders of magnitude. More importantly, the experiments indicate that factorizing  frequent star patterns by using surrogate keys enable for the creation of compact RDF graphs that reduce size while preserving the information in the original RDF graph.
\end{inparaenum}  

The article is structured as follows: We motivate our research in Section~\ref{sec:motivationExp}, and present an analysis of the state of the art in Section~\ref{sec:relatedwork}. Our approach is defined in Section~\ref{sec:approach}, while Section~\ref{sec:experiment} reports on the results of the experimental study. Finally, we conclude with an outlook on future work in Section~\ref{sec:conclusion}.

\section{Motivating Example}
\label{sec:motivationExp}

\begin{figure*}[t!]
\label{motivationExpGSPA}
\centering
     \vspace{0pt}\subfloat[Subgraphs per 4 Properties]{
      \includegraphics[width=0.35\textwidth]{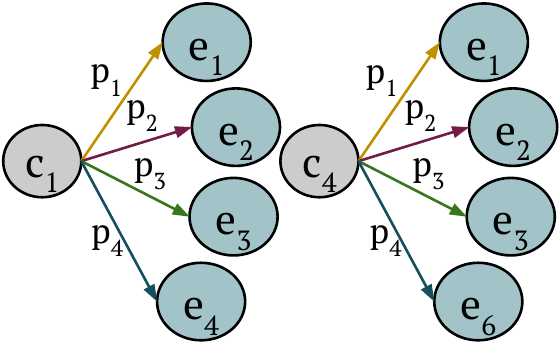}
      \label{fig:gspan4}}
       \vspace{-0.50pt}
       \vspace{0pt}\subfloat[Subgraphs involving three Properties]{
      \includegraphics[width=0.25\textwidth]{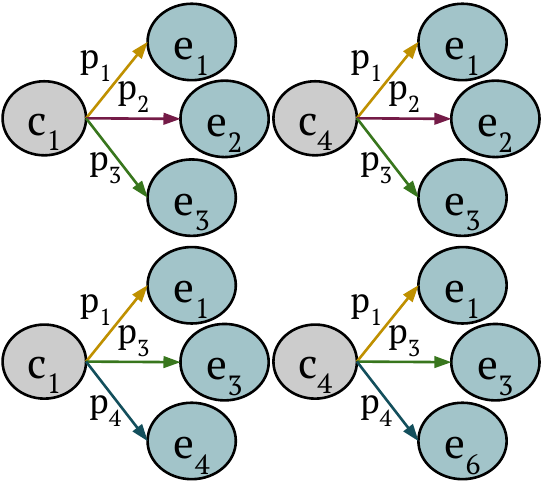}
      \label{fig:gspan3}}
       \vspace{-0.50pt}
  \vspace{0pt}\subfloat[Subgraphs involving two Properties]{
      \includegraphics[width=0.21\textwidth]{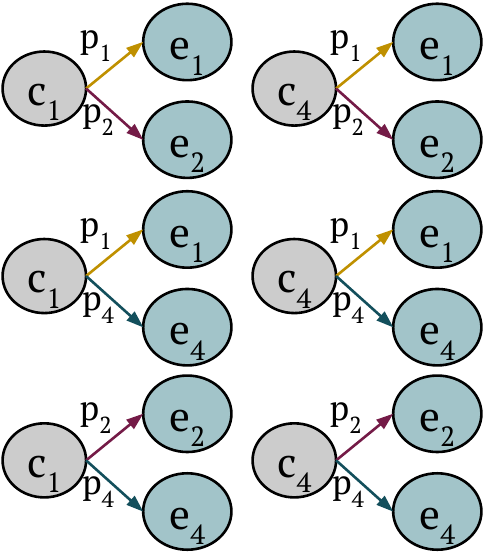}
      \label{fig:gspan2}}
   \caption{{\bf Graph Patterns Identified by gSpan}. Subgraphs, involving entities $c_1$ and $c_4$, extracted by gSpan from the RDF graph in Figure~\ref{fig:duplicates}. (a) Subgraphs per set $\{p_1,p_2,p_3,p_4\}$ of properties; (b) Subgraphs involving three properties from $p_1$, $p_2$, $p_3$, and $p_4$; (c) Subgraphs around two properties from $p_1$, $p_2$, $p_3$, and $p_4$.}
\end{figure*} 

We motivate the problem addressed by this work with an RDF graph where entities of the same type -- or resources -- match the same star pattern. In an RDF graph, matching the same star pattern means that the properties and objects are the same, whereas the entities are different. 
When the number of entities matching a star pattern is very high, the size of the RDF graph increases and the query processing over the RDF graph is affected negatively. A star pattern with a high number of matching entities is a frequent star pattern.
Figure~\ref{fig:duplicates} depicts an RDF graph composed by a class $C$, the entities $c_1$, $c_2$, $c_3$, $c_4$, $e_1$, $e_2$, $e_3$, $e_4$, $e_5$, and $e_6$, and the properties $p_1$, $p_2$, $p_3$, and $p_4$. 
A directed edge $(s\; p\; o)$ in the RDF graph stands for an RDF triple where $p$ is a label that represents an RDF predicate, while $s$ and $o$ are subject and object nodes, respectively.
Edges labeled with the predicate \textit{type}\footnote{property \textit{type} refers to \textit{rdf:type}}, indicate that $c_1$, $c_2$, $c_3$ and $c_4$ are of the same type, i.e., the class $C$. The directed edge $(c_1\; p_1\; e_1)$ expresses that the entity $c_1$ is related to object $e_1$ with the property $p_1$. Similarly, entities $c_2$, $c_3$, and $c_4$ are related to object $e_1$ with the property $p_1$, i.e., the indegree of $e_1$ is four. Similarly, entities $c_1$, $c_2$, $c_3$ and $c_4$ are related to $e_2$ and $e_3$ with the properties $p_2$ and $p_3$, respectively. Note that entities $c_1$, $c_2$, $c_3$, and $c_4$ are associated with the same objects, i.e., $e_1$, $e_2$ and $e_3$ through the edges annotated with same properties $p_1$, $p_2$, and $p_3$. 
Albeit sound, these redundant labeled edges generate frequent star patterns because entities of the same type are described using the same properties and objects. 
Figure~\ref{fig:fgp} illustrates the RDF subgraphs that map to the same star pattern, shown in Figure~\ref{fig:redundancy}, extracted from the RDF graph in Figure~\ref{fig:duplicates}; note that \textit{?x} is a variable whose instantiations correspond to constants in the RDF graph. In these RDF subgraphs, the properties $p_1$, $p_2$, and $p_3$, and the corresponding objects $e_1$, $e_2$, and $e_3$, respectively, are the same, whereas the entities $c_1$, $c_2$, $c_3$, and $c_4$ are different. This indicates that the star pattern is a frequent star pattern, i.e., several entities $c_1$, $c_2$, $c_3$, and $c_4$ instantiate the star pattern. Thus, several entities are related to the same objects, even not all the properties of the class are involved in frequent star patterns. A frequent star pattern comprising the entities $c_1$, $c_2$, $c_3$, and $c_4$ is illustrated in Figure~\ref{fig:redundancy}, where the node $?x$ represents the entities $c_1$, $c_2$, $c_3$, and $c_4$ of class $C$ in the RDF graph in Figure~\ref{fig:duplicates}. 
gSpan~\cite{yan2002gspan} solves the problem of identifying the frequent subgraphs that involve same subject entities related to the same object values using a set of properties. However, our approach requires the identification of frequent star patterns, where each star pattern-- with a subject variable--involves different subject entities related to the same object values using a set of properties. Figures~\ref{fig:gspan4} ,\ref{fig:gspan3}, and \ref{fig:gspan2} show some of the subgraphs extracted by gSpan involving entities $c_1$ and $c_4$ , and the sets of properties containing four, three, and two properties, respectively, from the RDF graphs in Figure~\ref{fig:duplicates}. gSpan exhaustively enumerates the frequent subgraphs; thus, finding frequent star patterns requires an exhaustive search over the generated frequent subgraphs. 
In this work, we exploit the RDF model and propose a technique that allows for transforming an RDF graph $G$ into another RDF graph $G'$ where the number of frequent star patterns is minimized. The graph $G'$ includes all the nodes from $G$ but additionally, $G'$ comprises nodes that represent \textit{factorized entities}-- like the one in Figure~\ref{fig:elemnetsNumber}.

\section{Related Work}
\label{sec:relatedwork}
Database and Semantic Web communities have proposed several representations to speed up processing over the large amounts of data represented using relational and RDF data models~\cite{abadi2006integrating,allen2019understanding,alvarez2011compressed,fernandez2013binary,joshi2013logical,karim2017large,meier2008towards,pichler2010redundancy,zukowski2006super}. 
These compression approaches can be categorized into compression techniques for relational and RDF graph data models. 
Relational data model approaches~\cite{abadi2006integrating,zukowski2006super} efficiently store very large datasets in column-oriented stores. Approaches~\cite{alvarez2011compressed,fernandez2013binary,joshi2013logical,karim2017large,meier2008towards,pan2014graph,pichler2010redundancy,zhu2018predicate} target the efficient storage of RDF graph data. Furthermore, several frequent pattern mining algorithms~\cite{elseidy2014grami,yan2002gspan} extract frequent isomorphic graph patterns from a graph.


\subsection{Data Compression for Relational Data Models}
Column-oriented databases~\cite{stonebraker2005c,zukowski2006super} store each attribute in a separate column such that successive values of the attribute are accumulated consecutively on the disk. This improves the query processing when the values of some of the columns are required to process the query.  The column oriented data storage opens a number of opportunities to apply compression techniques more naturally over the multiple values of the same type. Compression approach proposed by Abadi et al.~\cite{abadi2006integrating} compress each column in C-store~\cite{stonebraker2005c} using one of the methods like Null Suppression, Dictionary Encoding, Run-length Encoding, Bit-Vector Encoding or Lempel-Ziv~\cite{roth1993database,westmann2000implementation}. Zukowski et al.~\cite{zukowski2006super} focus on improving bad CPU/cache performance caused by the compression techniques involving if-then-else statements in the code, e.g., Null Suppression, Run-length Encoding, and does not take advantage of the super-scalar properties, e.g., pipe-lining the processes, in the modern CPUs. Zukowski et al. propose three compression methods i.e., PFOR, PFOR-DELTA, and PDICT. 
These compression solutions are exploited by column-oriented stores using the decomposition storage model~\cite{copeland1985decomposition}, where \emph{n-array} relations are decomposed into \emph{n} binary relations. Each binary relation consists of one attribute values and the corresponding surrogate keys.  
In this model, two copies of data are stored increasing the data storage requirements. Further, for each attribute a copy of the corresponding duplicated surrogate key is required resulting in an increase of the storage by a factor of two. 
Moreover, various compression techniques for a large number of unique values, i.e., subject entities, are hard to implement.
Our approach generates a factorized graph where entities matching a frequent star pattern are represented by a surrogate entity of the corresponding compact RDF molecule. These compact graph representations replace repeated properties and corresponding objects with properties and objects in the compact RDF molecules, hence, improve the storage space requirements for the decomposition storage model~\cite{copeland1985decomposition}.

\subsection{Data Compression for the RDF Data Model}
Meier et al.~\cite{meier2008towards} propose a user-specific minimization technique based on Datalog rules to remove the RDF triples from a given RDF graph. Similarly, Pichler et al.~\cite{pichler2010redundancy} study the RDF redundancy elimination in the presence of rules, constraints, and queries specified by users. These two approaches are user specific and require human input for compressing the ever growing RDF graphs. A scalable lossless RDF compression technique, proposed by Joshi et al.~\cite{joshi2013logical}, automatically generates decompression rules. The rules are used to split the RDF datasets into an active dataset containing compressed triples, and a dormant dataset consisting of uncompressed RDF triples. This technique requires the overhead of decompression over the compressed data to access the information initially represented in datasets. A factorized representation of RDF graphs is presented by Karim et al.~\cite{karim2017large}, where repeated observation values are represented only once. This approach reduces the number of RDF triples in the observational data, which is semantically described using the Semantic Sensor Network (SSN) Ontology~\cite{compton2012ssn}.   
We propose an approach to automatically identify frequent star patterns in RDF graphs described using any ontology. Further, we devise factorized graphical representations of RDF graphs which do not require data decompression to perform data management tasks.
Fern\'{a}ndez et al.~\cite{fernandez2013binary} present a binary RDF representation format consisting of a Header, a Dictionary and a Triple component containing RDF metadata, RDF terms catalog, and compactly encoded RDF triples, respectively. Pan et al.~\cite{pan2014graph} propose RDF compression based on graph patterns, which reduces the number of RDF triples and then generates compact binary representations of the reduced triples. The compression technique k\textsuperscript{2}-triples presented by \'Alvarez-Garc\'ia et al.~\cite{alvarez2011compressed} exploits the two dimensional k\textsuperscript{2}-trees structure, proposed by Barisaboa et al.~\cite{brisaboa2009k}, to distribute the compact triples obtained by Header-Dictionary-Triples partitioning~\cite{fernandez2013binary}. These approaches are able to effectively reduce redundancies in RDF graphs, and provide effective techniques for RDF graph compression. However, customized engines are required to perform query processing over the compressed RDF graphs, and decompression techniques are needed during data management.
We devise factorization techniques that use semantics encoded in RDF data and compactly represent RDF triples, reduce redundancy, and facilitate data management tasks without requiring any decompression or a customized engine.

\subsection{Graph Mining Techniques}   
The problem of frequent pattern mining involves finding subgraphs, from a graph, that have frequency above a given threshold. gSpan~\cite{yan2002gspan} exploits the depth first search (DFS) to mine frequent patterns. gSpan maps a graph to a DFS code representing the edges sequence. Several DFS codes can be generated for a single graph. These DFS codes are ordered lexicographically based on the edge labels and the order of nodes being visited. From these ordered DFS codes the minimum DFS codes are selected to build the DFS tree. DFS over a code tree discovers all the minimum DFS codes of frequent patterns. GRAMI~\cite{elseidy2014grami} mines frequent patterns and finds only the minimal set of instances that satisfy the given frequency threshold.  GRAMI stores the templates of frequent patterns instead of storing their appearances. This avoids the creation and storage of all appearances of patterns. For frequency evaluation, GRAMI maps the frequent patterns mining problem to constraint satisfaction problem (CSP), which is represented by a tuple; (a) an ordered set of variables representing nodes, (b) a set of domains of variables in (a), and (c) a set of constrains between these variables. Two subgraphs patterns are isomorphic if the variables in corresponding CSP tuple have different values from the domains, however, nodes and edge labels are the same. Notwithstanding these frequent pattern mining approaches are able to identify the frequent isomorphic graph patterns, extracting frequent star patterns, which involve different subject nodes related with same objects nodes using same set of edge labels, requires an exhaustive search over the identified frequent patterns. It is important to highlight that although these approaches effectively mine subgraph patterns, they are not able to identify patterns where one node is a variable. Contrary, our approach searches for star patterns and is able to detect the ones with highest instantiations.

\section{RDF Graph Factorization Approach}
\label{sec:approach}

We introduce important preliminary definitions, and then formally define the problem of detecting frequent star patterns and compacting them in an RDF graph. 

\subsection{Preliminaries}
Our approach is based on the RDF data model building on RDF triples.
\begin{definition}[RDF triple~\cite{arenas2009foundations} ] 
Let $\mathbf{I}$, $\mathbf{B}$, $\mathbf{L}$ be disjoint infinite sets of URIs, blank nodes, and literals, respectively.
A tuple $(s\; p\; o) \in (\mathbf{I} \cup \mathbf{B}) \times \mathbf{I} \times (\mathbf{I} \cup \mathbf{B} \cup \mathbf{L})$ is an RDF triple, where $s$ is the subject, $p$ is the property, and $o$ is the object. %
\end{definition} 

A set of RDF triples is called RDF dataset (or knowledge graph) and can also be viewed as a graph. Thus, in Figure \ref{fig:duplicates}, the edge $(c_1\; type\; C)$ represents an \textit{RDF triple}, where entity $c_1$ corresponds to subject, $type$ and $C$ represent a property and an object, respectively; there are nineteen more \textit{RDF triples}.

\begin{definition}[RDF Graph]
An RDF graph $G=(V,E,L)$ is a labeled directed graph where nodes represent entities or objects, while labels stand for properties:
\begin{itemize}
    \item An RDF triple $(s\; p\; o) \in E$, corresponds to an edge in $E$ from node $s$ to node $o$; $p$ is the label of the edge and denote the property that relates both nodes;
    \item $s$, $o$ $\in V$, $s$ corresponds to a subject and $o$ corresponds to an object; and
    \item $p \in L$, is an edge label corresponding to a property.
\end{itemize}
\end{definition}


\begin{definition}[RDF Molecule~\cite{FernandezLC14}]
An RDF molecule $RM$ is a set of RDF triples that share the same subject, i.e., $RM$= $(s \; p_1 \; o_1)$,$(s \; p_2 \; o_2)$,$\dots$,$(s \; p_n \; o_n)$.
\end{definition}

Figure~\ref{fig:fgp} presents four RDF molecules around the subjects $c_1$, $c_2$, $c_3$, and $c_4$ of class $C$. In the RDF molecule around subject $c_1$ all the RDF triples describe $c_1$ using properties $p_1$, $p_2$, and $p_3$. Similarly, RDF triples in each of the other RDF molecules describe the subjects $c_2$, $c_3$, and $c_4$ using properties $p_1$, $p_2$, and $p_3$.


\subsection{Problem Statement}
Star patterns denote graph patterns covering RDF molecules:
\begin{definition}[Star Pattern] 
Given is an RDF graph $G=(V,E,L)$, a class $C$ in $E$ and a set of properties $SP= \{p_1, p_2,\dots, p_n\}$ such that $C$ is the domain of all the properties in $SP$. 
Let entities $o_1,o_2,\dots,o_n$ be the objects of the properties $p_1, p_2,\dots, p_n$, respectively. Let $?s$ be a variable. 
A star pattern of $C$ over the properties $p_1, p_2,\dots, p_n$ and objects $o_1,o_2,\dots,o_n$ corresponds to a graph pattern composed of the conjunction of triple patterns: $(?s \; p_1 \; o_1)$,$(?s \; p_2 \; o_2),\dots,(?s \; p_n \; o_n)$. 

\end{definition}

Figure~\ref{fig:redundancy} shows a star pattern composed of three triple patterns containing properties $p_1$, $p_2$, and $p_3$ and the corresponding objects $e_1$, $e_2$, and $e_3$, respectively. The entities $c_1$, $c_2$, $c_3$, and $c_4$ of class $C$ in the RDF graph in Figure~\ref{fig:duplicates} match the star pattern. The  variable $?x$ is the subject of the triple patterns referring to the entities matching the star pattern.

\begin{definition}[Class Multiplicity] 
Given an RDF graph $G=(V,E,L)$, a class $C$ in $E$ and a set of properties $SP= \{p_1, p_2,\dots, p_n\}$ such that $C$ is the domain of all the properties in set $SP$ of properties.
Let entities $o_1,o_2,\dots,o_n$ be objects of the properties $p_1, p_2,\dots, p_n$, respectively. 
The multiplicity of $o_1$, $o_2$,$\dots$, $o_n$ in $G$,  $M(o_1,o_2,\dots,o_n|G)$ is defined as the number of different entities in $C$ that match a star pattern having the same objects $o_1,o_2,\dots,o_n$ in the properties $p_1,p_2,\dots,p_n$. Entities $s$ correspond to instantiations of the subject variable in the star pattern.  
\[\arraycolsep=1.2pt
\begin{array}{ll}
M(o_1, o_2,\dots,o_n|G)=  |\{s|&(s \;\texttt{:type} \; C) \in G, \\&(s\; p_1 \; o_1) \in G, (s\; p_2 \; o_2) \in G,
\\&\dots,(s\; p_n \; o_n) \in G \}|
\end{array}
\]
\end{definition}

In the RDF graph in Figure \ref{fig:duplicates}, the multiplicity of the objects $e_1$, $e_2$ and $e_3$, given the set $\{p_1,p_2,p_3\}$ of properties, is $4$, because there are four instantiations of the subject variable. Similarly, the multiplicity of objects $e_4$, $e_5$ and $e_6$, in the set $\{p_4\}$ of properties is $1$ and $2$.

\begin{definition}[Class Multiplicity Inverse] 
Given class $C$, a set $SP=\{p_1,p_2,\dots,$ $p_n\}$ of properties and corresponding objects $o_1$, $o_2$,$\dots$, $o_n$, the multiplicity inverse of $o_1$, $o_2$,$\dots$, $o_n$ 
in $G$,  denoted $MI(o_1,o_2,\dots,o_n|G)$, is:
\[MI(o_1,o_2,\dots,o_n|G)=1/M(o_1, o_2,\dots,o_n|G)\]
\end{definition}

In the RDF graph in Figure \ref{fig:duplicates}, the \textit{class multiplicity inverse} of the objects $e_1$, $e_2$, and $e_3$, given the set $\{p_1,p_2,p_3\}$ of properties, is $\frac{1}{4}$. The multiplicity inverse of objects $e_4$, $e_5$, and $e_6$ in the set $\{p_4\}$ of properties is $\frac{1}{1}$ and $\frac{1}{2}$.

\begin{definition}[Multiplicity of Star Patterns] 
Given a class $C$ in an RDF graph $G$ with properties $SP=\{p_1,p_2\dots, p_n\}$. The multiplicity of the star patterns in $C$ over $SP$, $AMI_G(p_1,p_2,\dots,p_n|C)$, is defined as follows:
\[ \]
\[\arraycolsep=1pt
\begin{array}{ll}
AMI_G(p_1,p_2,\dots,p_n|C) = & \lceil  f'_{\forall s \in C} (\{MI(o_1, o_2,\dots,o_n|G)\\&|(s \;\texttt{type} \; C) \in G, (s\; p_1 \; o_1) \in G,\\&  (s\; p_2 \; o_2) \in G,\dots,(s\; p_n \; o_n) \\&\in G \})\rceil
\end{array}
\]
where $f'(.)$ is an aggregation (e.g., summation) function.
\end{definition}

In the RDF graph in Figure~\ref{fig:duplicates}, the \textit{multiplicity of the star patterns} of $C$ over the set $\{p_1,p_2,p_3\}$ of properties is $\frac{1}{4}+\frac{1}{4}+\frac{1}{4}+\frac{1}{4}=1$, which is obtained by summing up the class multiplicity inverse of the objects $e_1$, $e_2$, and $e_3$ given the set $\{p_1,p_2,p_3\}$ of properties, for each entity $c_1$, $c_2$, $c_3$, and $c_4$ of class $C$ matching the star pattern. Similarly, the \textit{multiplicity of the star patterns} of class $C$ over the set $\{p_4\}$ is $\frac{1}{2}+\frac{1}{2}+\frac{1}{1}+\frac{1}{1}=3$ in the RDF graph, and is obtained by summing up the individual class multiplicity inverse of objects $e_4$, $e_5$, and $e_6$ given the set $\{p_4\}$, for each of the entities $c_1$, $c_2$, $c_3$, and $c_4$ of class $C$ that map the corresponding star patterns. The \textit{multiplicity of the star patterns} over a set of properties corresponds to the number of star patterns composed of the set of properties and the corresponding objects. The problem of frequent star patterns detection is defined next, the solutions correspond to frequent star patterns. We define the frequent star patterns detection problem as the minimization of connections between a class instances and values linked through the properties. To find the minimum number of edges over the properties in a class, the sum of the number of edges in the star patterns over a set of properties and the number of edges between the class entities and the properties that are not involved in the star patterns is computed.  
\begin{definition}[FSP Detection Problem]
Given an RDF graph $G=(V,E,L)$ and a class $C$ in $G$ with set of properties $S$ and number of instances $AM_G(C)$. The problem of \textit{Frequent Star Patterns Detection (FSP Detection)} 
is to find a subset $SP$ of $S$ such that the star patterns $SGP$ of $C$ over $SP$ corresponds to \emph{frequent star patterns}, i.e., $\#Edges(SP,C,G)$ is minimized:

 \begin{equation}
\label{eq:minProblem}
\tiny
 \argmin_{\mathit{SP}\subseteq \mathit{S}}{\{\underbrace{AMI_G(SP|C)*(|SP|+1)+AM_G(C)*(|S-SP|)}_{\#Edges(SP,C,G)}\}}
 \end{equation}
 \end{definition}

   \begin{figure*}[t!]
\centering
     \subfloat[$\#Edges(SP,C,G)$ over $p_1$,$p_2$,$p_3$, and $p_4$]{
      \includegraphics[width=.25\textwidth]{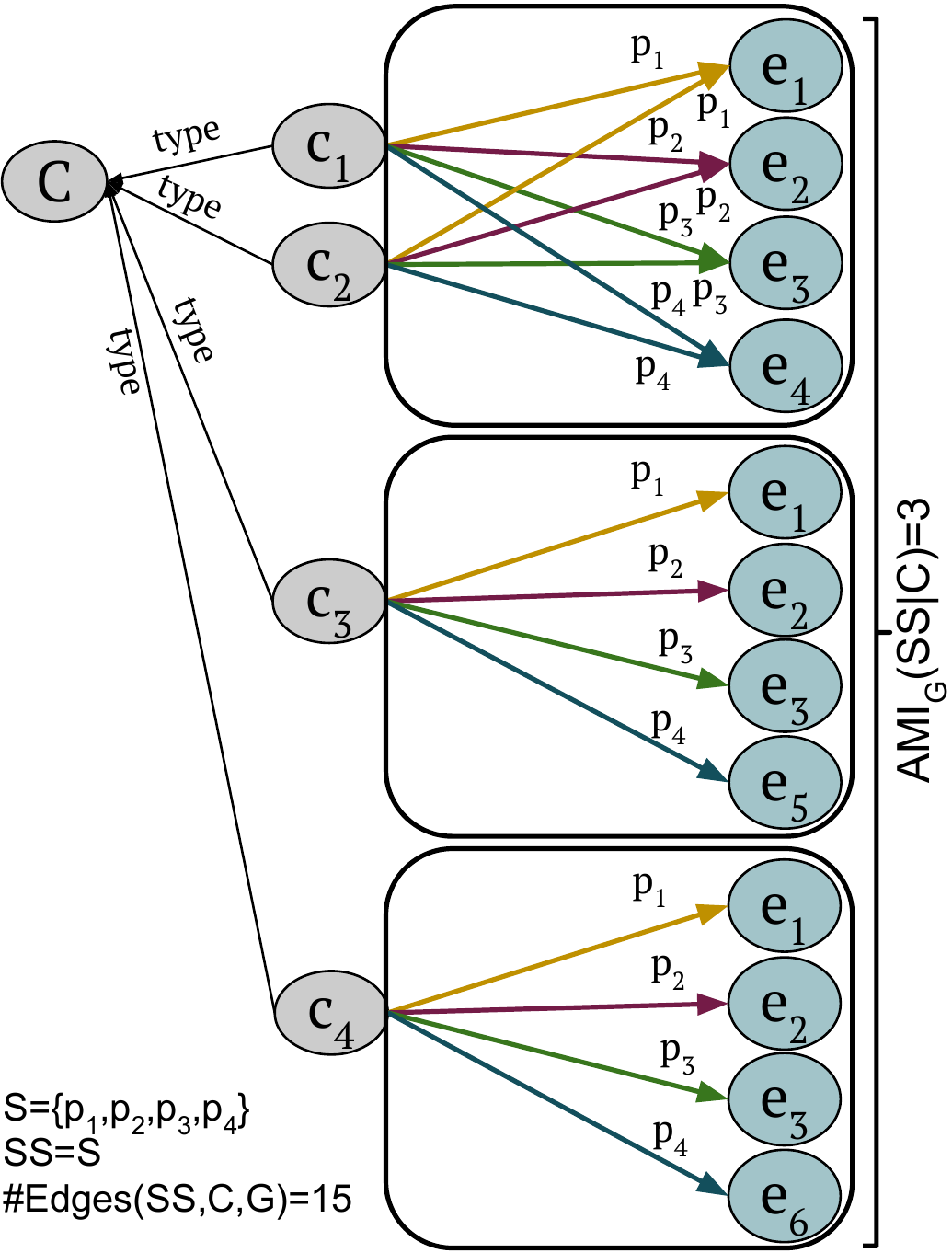}
      \label{fig:fspP1p2p3p4}}\hspace*{0.2em}
      \subfloat[$\#Edges(SP,C,G)$ over properties $p_1$,$p_2$, and $p_3$]{
      \includegraphics[width=.3\textwidth]{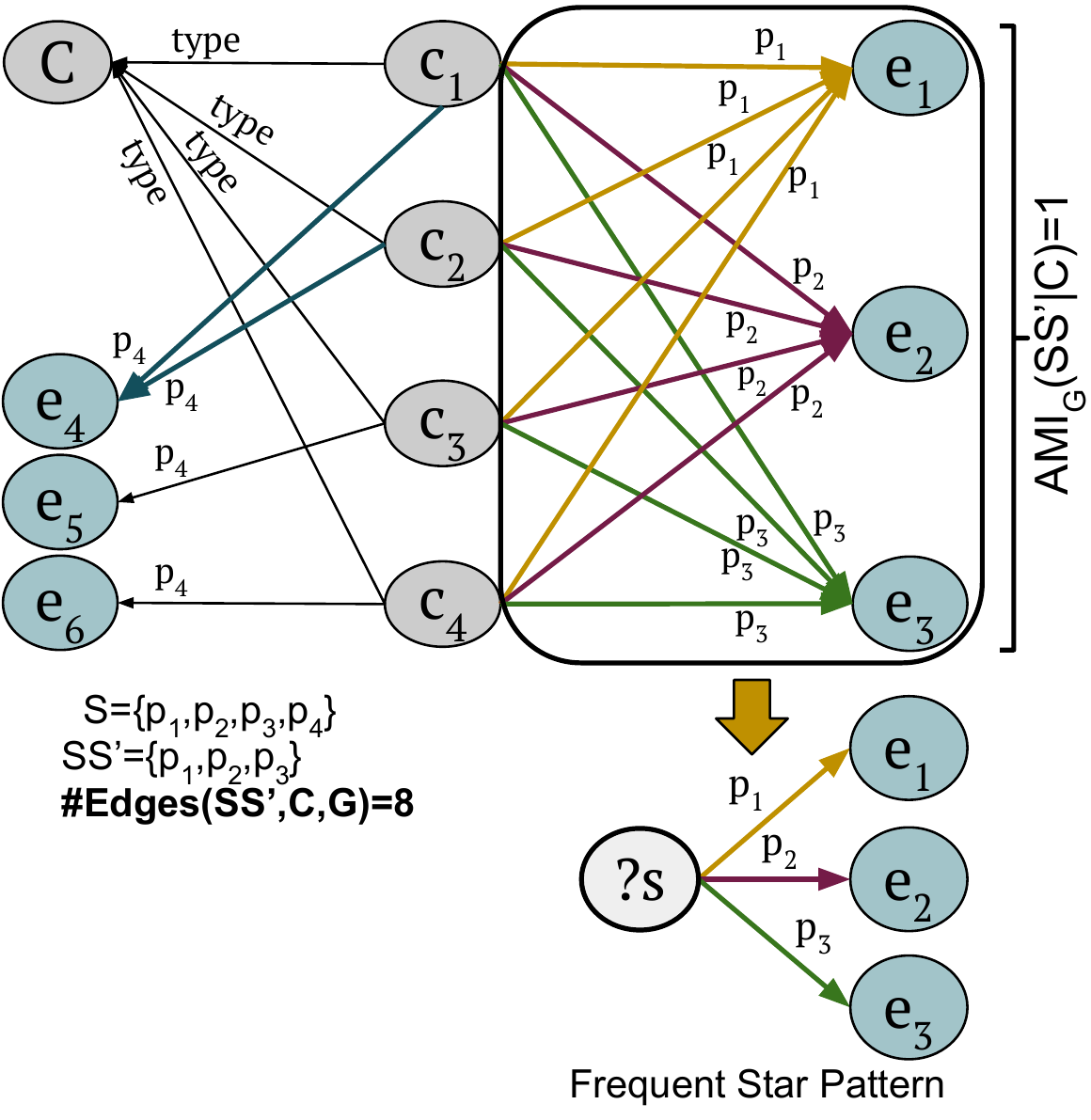}
      \label{fig:fspP1p2p3}}\hspace*{0.2em}
      \vspace{0pt}\subfloat[{\bf Factorized Graph } $G'$ from $G$]{
      \includegraphics[width=.36\textwidth]{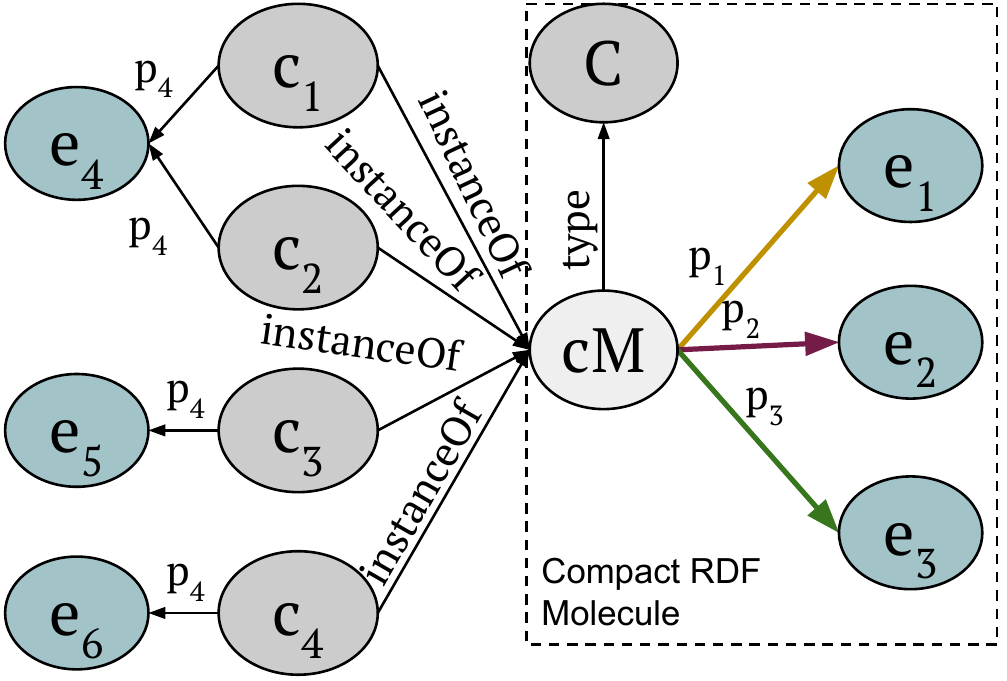}
      \label{fig:noduplicates}}
    \caption{{\bf The Frequent Star Patterns Detection Problem}. Properties involved in frequent star patterns. (a) Stars patterns over the set $SS=\{p_1,p_2,p_3,p_4\}$ of properties in class $C$ require three surrogate entities and $\#Edges(SS,C,G)$ are 15; (b) Star patterns over the set $SS'=\{p_1,p_2,p_3\}$ of properties in class $C$ require one surrogate entitiy and $\#Edges(SS',C,G)$ are eight; (c)  A factorized RDF graph $G'$ of $G$ composed of compact RDF molecule with a surrogate entity $cM$.}
    \label{fig:fspProblemExmp}
\end{figure*}

    %

Figure~\ref{fig:fspProblemExmp} illustrates the problem of detecting frequent star patterns from the RDF graph in Figure~\ref{fig:duplicates}. Figure~\ref{fig:fspP1p2p3p4} presents three star patterns $AMI_G(SS|C)$ over the set of properties $p_1$, $p_2$, $p_3$, and $p_4$, and 15 edges in $\#Edges(SS,C,G)$.  However, only one star pattern $AMI_G(SS'|C)$ over the set of properties $p_1$, $p_2$, and $p_3$ exists in Figure~\ref{fig:fspP1p2p3}. A small value of $\text{\small{\#Edges(SS',C,G)}}$ i.e., eight, shows a subgraph over $SS'$ that is represented by only one star pattern with more instantiations than the star patterns for $SS$, i.e., it is a frequent star pattern. Thus, the set of properties $SP$ where  $\text{\small{\#Edges(SP,C,G)}}$ is minimal, encloses a subgraph with the minimal number of star patterns which have the maximal number of instantiations; additionally, these star patterns are the ones with the greater number of properties.  Figure~\ref{fig:noduplicates} depicts the factorized RDF graph where this frequent star pattern has been replaced with a compact RDF molecule on a surrogate entity $cM$; this factorization reduces the size of the original RDF graph.

\begin{theorem}
\label{theorem:formula}
Given an RDF  graph $G$, a class $C$ in $G$, and non-empty sets of properties $S$, $SP$, and $SP'$ of $C$ such that $SP'\subset SP \subset S$. If $\#Edges(SP',C,G) > \#Edges(SP,C,G)$, then $\forall SP'' \subset SP'$, $\#Edges(SP'',C,G) \geq \#Edges(SP,C,G)$.
\end{theorem}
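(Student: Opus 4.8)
The plan is to reduce everything to elementary arithmetic on two fixed quantities — the instance count $n=AM_G(C)$ and the total number of properties $m=|S|$ — together with the set-dependent quantity $a_X:=AMI_G(X|C)$. In this notation every edge count has the uniform shape $\#Edges(X,C,G)=a_X\,(|X|+1)+n\,(m-|X|)$. The one genuinely structural ingredient I need is a monotonicity lemma for $AMI$: whenever $X\subseteq Y$ one has $a_X\le a_Y\le n$. I would prove this straight from the definition of $AMI_G(\cdot|C)$ as the ceiling of a sum of multiplicity inverses: grouping the instances of $C$ by the object-tuple they assign to the properties of $X$, a group of size $M$ contributes $M\cdot(1/M)=1$, so the sum equals the number of distinct object-tuples over $X$; enlarging $X$ to a superset $Y$ only refines this grouping (a tuple may split but never merge), so the count cannot decrease, and it is trivially bounded by the number of instances. (This uses that the properties of $S$ are total on $C$, the reading consistent with the penalty term $n\,(m-|X|)$.)

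Next I would translate the hypothesis and the goal into inequalities in these variables. Writing $k=|SP|$, $k'=|SP'|$, $k''=|SP''|$ with $k''<k'<k$, the hypothesis $\#Edges(SP',C,G)>\#Edges(SP,C,G)$ is equivalent to $a_{SP}(k+1)-a_{SP'}(k'+1)<n\,(k-k')$, and the goal $\#Edges(SP'',C,G)\ge\#Edges(SP,C,G)$ is equivalent to $a_{SP}(k+1)-a_{SP''}(k''+1)\le n\,(k-k'')$. Chaining these through the common term $a_{SP}(k+1)$, the goal follows from the hypothesis as soon as one establishes the single ``shrink inequality''
\[
a_{SP'}(k'+1)-a_{SP''}(k''+1)\ \le\ n\,(k'-k''),
\]
i.e.\ that replacing $SP'$ by a strict subset $SP''$ decreases $a_X(|X|+1)$ by at most $n$ for each removed property. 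Collapsing the three-set claim to this two-set inequality is the organizing step of the argument.

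The main obstacle is precisely this shrink inequality, and it is where all the content lies. By induction on $k'-k''$ it suffices to treat the removal of a single property, where the claim becomes $a_{SP'}(k'+1)-a_{SP''}\,k'\le n$, i.e.\ $k'\,(a_{SP'}-a_{SP''})+a_{SP'}\le n$. The monotonicity lemma only supplies $a_{SP'}-a_{SP''}\ge 0$ and $a_{SP'}\le n$, which controls the \emph{sign} but not the \emph{magnitude} of the drop — and in the wrong direction, since deleting one property can merge arbitrarily many object-tuples, so a priori $a_{SP'}-a_{SP''}$ is not bounded by $n$ at all. The heart of the proof must therefore be a finer bound coupling the fall of $AMI$ under property deletion to the instance count and the retained properties. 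I expect this to require either an additional structural assumption on the star patterns in play (for instance a frequency/multiplicity threshold on the patterns surviving in $SP'$) or a sharper accounting that ties the decrease $a_{SP'}-a_{SP''}$ to the size terms simultaneously. I would invest essentially all of the effort here, as the algebraic reductions above are routine once the shrink inequality is secured.
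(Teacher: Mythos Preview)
Your reduction is clean and correct: the monotonicity lemma $X\subseteq Y\Rightarrow a_X\le a_Y\le n$ holds for exactly the reason you give, and the chaining argument does collapse the three-set statement to your shrink inequality $a_{SP'}(k'+1)-a_{SP''}(k''+1)\le n(k'-k'')$. You are also right that this is where all the content lies and that monotonicity alone cannot supply it. The obstruction you flag is real, and in fact fatal: the shrink inequality is false in general, and with it the theorem as stated. Take $m=|S|=4$, $n=AM_G(C)=20$, $SP''=\{p_1,p_2\}\subset SP'=\{p_1,p_2,p_3\}\subset SP=S$, and arrange the data so that all $20$ instances agree on $(p_1,p_2)$, split into $15$ groups over $p_3$ (one group of size $6$, fourteen singletons), with $p_4$ constant inside each $p_3$-group. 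Then $a_{SP''}=1$, $a_{SP'}=a_{SP}=15$, so $\#Edges(SP)=75$, $\#Edges(SP')=80$, $\#Edges(SP'')=43$: the hypothesis $\#Edges(SP')>\#Edges(SP)$ holds while $\#Edges(SP'')<\#Edges(SP)$. No ``sharper accounting'' will rescue this; an extra hypothesis is genuinely required.

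For comparison, the paper argues by contradiction and is much terser, but it runs into the same wall from the other side. It correctly extracts $a_{SP},a_{SP'},a_{SP''}<n$ and $|SP|-|SP''|\ge 2$ from the two assumed inequalities, and then asserts that these alone force $\#Edges(SP'')\ge\#Edges(SP)+2n$. Unpacked, that claim needs $a_{SP''}(k''+1)\ge a_{SP}(k+1)$ (the $2n$ is supplied entirely by $n(k-k'')\ge 2n$), whereas monotonicity gives precisely the opposite sign, and the strict bounds $a_X<n$ do nothing to control this difference. In the configuration above, $\#Edges(SP'')-\#Edges(SP)=-32$, not $\ge 2n=40$. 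So your diagnosis that an additional structural assumption is needed matches the actual state of affairs; neither your outline nor the paper's sketch can be completed without one.
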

\begin{proof}
By contradiction. Suppose $\#Edges(SP'',C,G) < \#Edges(SP,C,G)$. From $\#Edges(SP',C,G) > \#Edges(SP,C,G)$ and $SP'\subset SP \subset S$, it can be inferred that $AMI_G(SP|C)<AM_G(C)$, $AMI_G(SP'|C)<AM_G(C)$, $|SP''|<|SP'|<|SP|<|S|$, $|SP-SP''|\geq2$, and $AMI_G(SP''|C)<AM_G(C)$. Considering these inequalities in  $\#Edges(SP'',C,G)$ and $\#Edges(SP,C,G)$, we can demonstrate that $\#Edges(SP'',C,G)$ is at least greater than $\#Edges(SP,C,G)$ in $2*AM_G(C)$, contradicting, thus, $\#Edges(SP'',C,G) < \#Edges(SP,C,G)$.
\end{proof}

 \begin{figure*}[t!]
\centering
     \vspace{-0.1cm}\subfloat[{\bf \textmu\textsubscript{N}} from $G$ into $G'$]{
      \includegraphics[width=.28\textwidth]{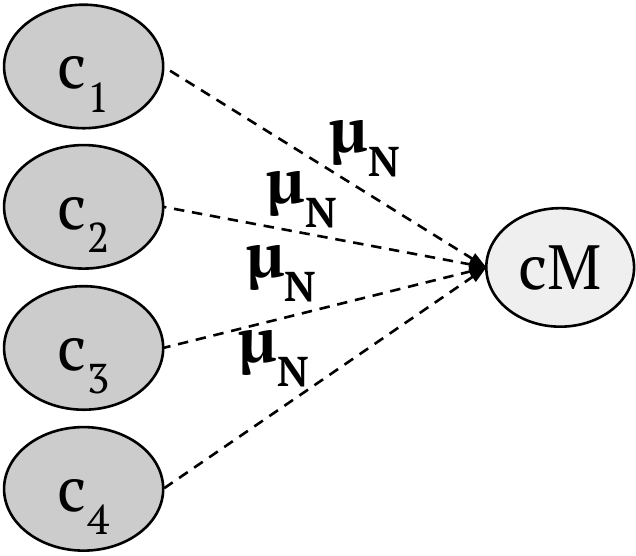}
      \label{fig:mapNode}}
      \vspace{-0.1cm}\subfloat[{\bf $type$} $G$ into $instanceOf$ $G'$]{
      \includegraphics[width=.32\textwidth]{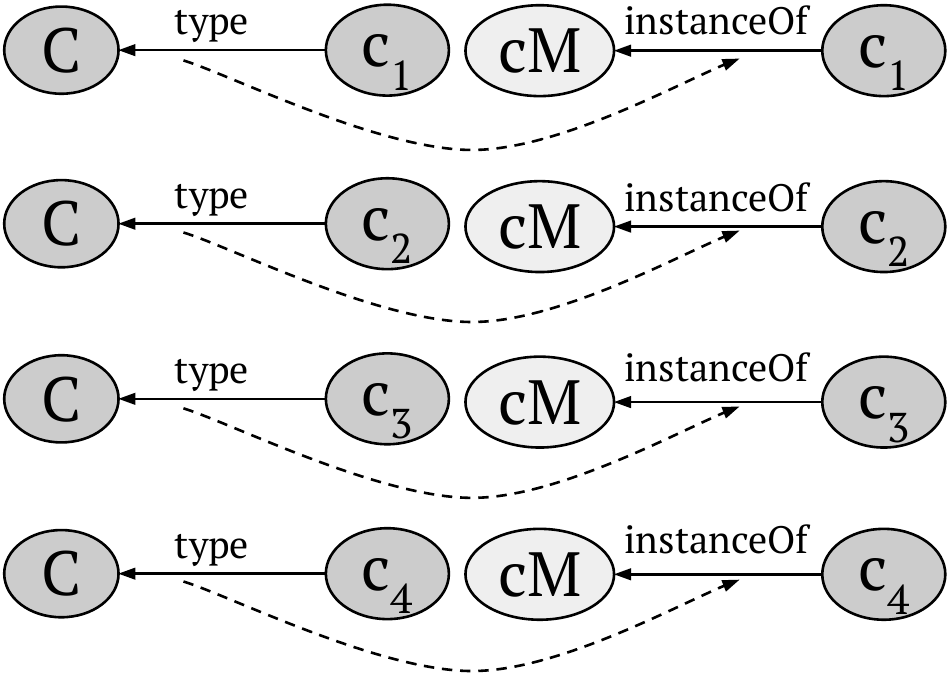}
      \label{fig:mapEdge}}
        \vspace{-0.1cm}\subfloat[A Compact RDF Molecule]{
      \includegraphics[width=.2\textwidth]{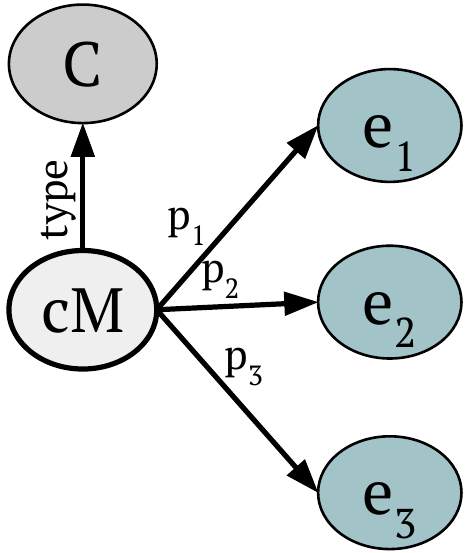}
      \label{fig:elemnetsNumber}}
    \caption{{\bf The RDF Graph Factorization Problem}. Factorization of RDF graph $G$ into $G'$. (a) Entity mappings $\mu_N$ from the RDF graph $G$ in \ref{fig:duplicates}  to the surrogate entity $cM$ in $G'$; (b) Transformation of property $type$ from $G$ to $G'$; (c) A compact RDF molecule for the frequent star pattern over the properties $p_1$, $p_2$, and $p_3$.}
    \label{fig:FRDFProblem}
\end{figure*} 
 
\begin{definition}[A Compact RDF Molecule]
Given a star pattern \textit{SGP} of a class $C$ 
over the properties $p_1, p_2,\dots, p_n$ and objects $o_1,o_2,\dots,o_n$. Given a surrogate entity $sg$ of type $C$. A compact RDF molecule for \textit{SGP} is an RDF molecule composed of RDF triples $(sg \; p_1 \; o_1)$,$(sg \; p_2 \; o_2)$,$\dots$,$(sg \; p_n \; o_n)$.
\end{definition}

Figure~\ref{fig:elemnetsNumber} shows a compact RDF molecule that instantiates the star pattern presented in Figure~\ref{fig:redundancy}, which is composed of the properties $p_1$, $p_2$, and $p_3$ and the corresponding objects $e_1$, $e_2$, and $e_3$, respectively. The surrogate entity $cM$ in the compact RDF molecule, represents the entities $c_1$, $c_2$, $c_3$, and $c_4$ of type $C$ matching the star pattern, as shown in Figure~\ref{fig:fgp}.

\begin{definition}[The RDF-F Problem]
Given an RDF graph $G=(V,E,L)$ and a set of properties $SP$, the problem of \textit{RDF factorization (RDF-F)} corresponds to finding \textit{a factorized RDF graph} of $G$, $G'=(V',E',L')$, where the following hold:
\begin{itemize}
\item Entities in $G$ are preserved in $G'$, i.e., $V \subseteq V'$. 
\item For each entity $s_i$ in $V$ that corresponds to an instantiation of the variable of a frequent star pattern $SGP$ of a class $C$ over the set $SP$ in $G$, there is an entity $s_{SGP}$ in $V'$ that corresponds to the surrogate entity of the compact RDF molecule of $SGP$. Formally, there is a partial mapping $\mu_N$: $V \rightarrow V'$:
\begin{itemize}
\item Instances of the frequent star pattern $SGP$ are mapped to the surrogate entity of the star pattern, i.e., $\mu_N(s_i)$=$s_{SGP}$.
    \item The mapping  $\mu_N$ is not defined for the rest of the entities that do not instantiate a frequent star pattern in $G$.
\end{itemize}
\item For each RDF triple $t$ in ($s \; p \; o$) in $E$:
\mbox{}
\begin{itemize}
\item If $\mu_N(s)$ is defined and $C_s$ is the type of $s$, and $p$ is $type$, then the triples  ($s \; \textit{instanceOf} \; \mu_N(s)$), ($\mu_N(s)\; type \; C_s$) belong to $E'$.
\item If $\mu_N(s)$ is defined and $C_s$ is the type of $s$, and $p \in SP$, then the triples  ($\mu_N(s)\; p \; o $) belong to $E'$.
\item Otherwise, the RDF triple $t$ is preserved in $E'$.
\end{itemize}
\end{itemize}
\end{definition}

Consider RDF graphs $G$ and $G'$ shown in Figures~\ref{fig:duplicates} and \ref{fig:noduplicates}, respectively.
Figure~\ref{fig:mapNode} depicts a map $\mu_N$ that assigns entities $c_1$, $c_2$, $c_3$, and $c_4$ of class $C$ in $G$ to the surrogate entity $cM$ in $G'$. Further, entities $c_1$, $c_2$, $c_3$, $c_4$, $C$, $e_1$, $e_2$, $e_3$, $e_4$, $e_5$, and $e_6$ are preserved in $G'$. Moreover, the edge labeled with property $p_1$ in $G$, i.e., ($c_1\; p_1 \; e_1$) is presented with edges ($c_1\; instanceOf \; cM$), ($cM\; p_1 \; e_1$) and ($cM\; type \; C$) in $G'$; similarly, edges labeled with properties $p_2$ and $p_3$ in $G$ are represented in $G'$. Figure~\ref{fig:mapEdge} shows the transformations of the connections between entities $c_1$, $c_2$, $c_3$, and $c_4$ and the class $C$ using labeled edges annotated with property $type$, with the connections relating the entities $c_1$, $c_2$, $c_3$, and $c_4$ to the corresponding surrogate entity $cM$ using the property $instanceOf$.

\begin{definition}[Axioms for InstanceOf]
The property \emph{instanceOf} is a functional property defined as follows:
\begin{itemize}
\item If ($s_i \; \textit{instanceOf} \; sg $) and 
($sg \; \textit{type} \; C $) then
($s_i \; \textit{type} \; C$).
\item If ($s_i \; \textit{instanceOf} \; sg $) and 
($sg \; p_j \; o_k $) then
($s_i \;  p_j \; o_k $).
\end{itemize}
\end{definition}

These two axioms enable to represent implicitly, all the knowledge encoded in the edges from an original RDF graph that are removed during the factorization process. They are utilized during query processing to rewrite queries over the original RDF graph into queries against the factorized RDF graph.

\subsection{FSP Detection Approach}

\begin{algorithm}
\caption{E.FSP Algorithm}
\label{algo:exhfsgDetection}
\begin{algorithmic}[1]
  \REQUIRE A dictionary $\textit{subgraphsDict}$ of subgraphs over the subsets of properties in $S$, A set $S$ of properties of class $C$.
  \ENSURE Frequent star patterns $\textit{fsp}$, A set $SP$ of properties
  \STATE $\textit{fsp} \leftarrow []$, $SP \leftarrow \emptyset$, $\textit{minEdges} \leftarrow 0$, $\textit{subsetCard} \leftarrow |S|$
  \WHILE{$\textit{subsetCard} \geq 2$}
    \STATE $\textit{propSets} \leftarrow \textit{getSubsetsOf}(S, \textit{subsetCard})$
    \FOR{$SP \in \textit{propSets}$}
        \STATE  $\textit{subgraphs} \leftarrow \textit{subgraphsDict}[SP]$
        \STATE $\textit{totalEdges} \leftarrow \textit{countEdges}(\textit{subgraphs})$
        \IF{$\textit{minEdges} == 0$}
            \STATE $\textit{minEdges} \leftarrow \textit{totalEdges}$
            \STATE $\textit{fsp} \leftarrow \textit{subgraphs}$
            \STATE $\textit{bestSP} \leftarrow SP$
        \ELSIF{$\textit{totalEdges} < \textit{minEdges}$}
            \STATE $\textit{minEdges} \leftarrow \textit{totalEdges}$
            \STATE $\textit{fsp} \leftarrow \textit{subgraphs}$
            \STATE $\textit{bestSP} \leftarrow SP$
        \ENDIF
    \ENDFOR
    \STATE $\textit{subsetCard} \leftarrow \textit{subsetCard} - 1$
  \ENDWHILE
  \STATE $SP \leftarrow \textit{bestSP}$
\RETURN{$\textit{fsp}$, $SP$ }
\end{algorithmic}
\end{algorithm}

\begin{figure*}[t!]
\centering
      \subfloat[Exhaustive FSP Approach (E.FSP)]{
      \includegraphics[width=.5\textwidth]{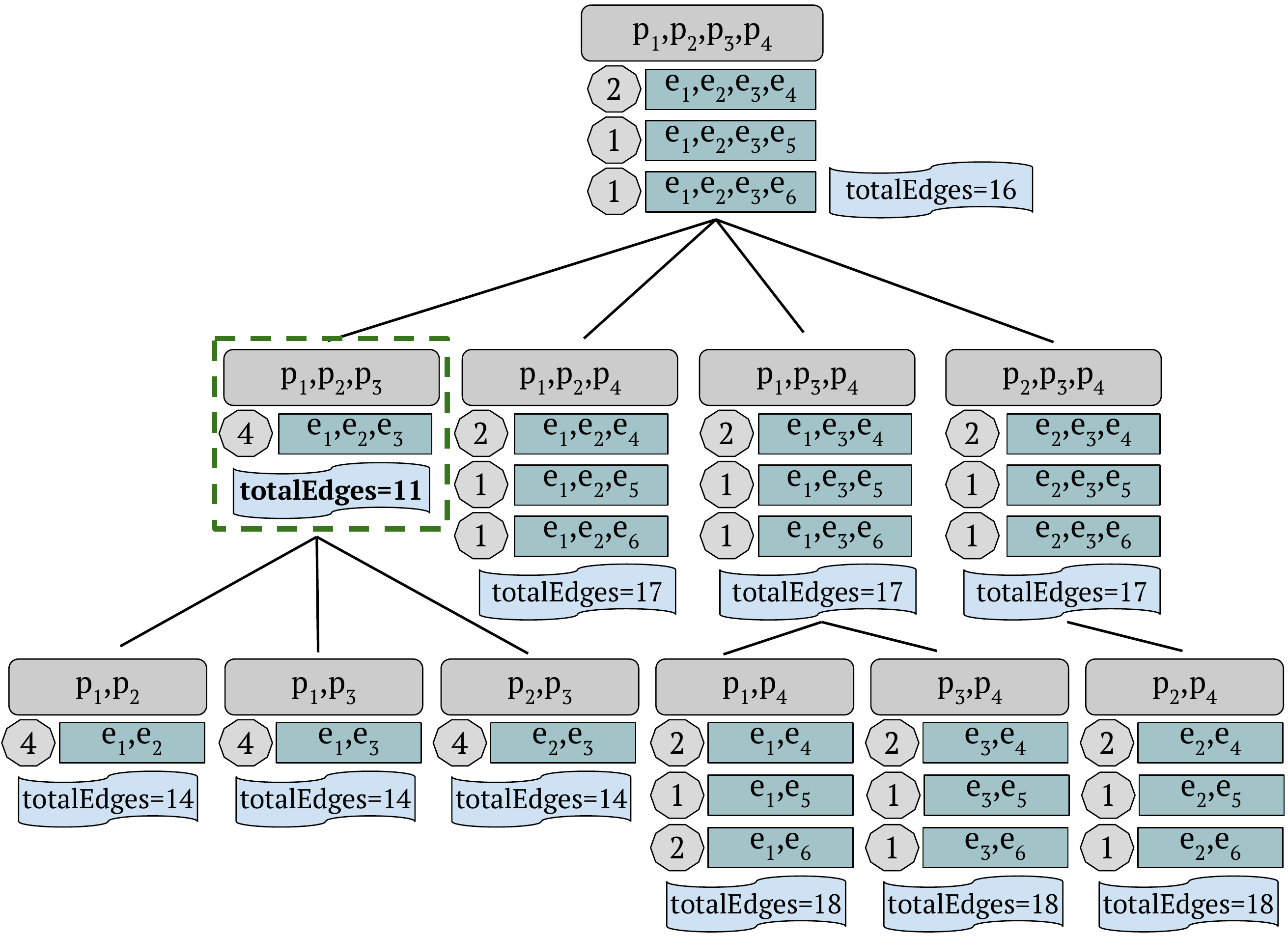}
      \label{fig:exhExmp}}
      \subfloat[Greedy FSP Approach (G.FSP)]{
      \includegraphics[width=.5\textwidth]{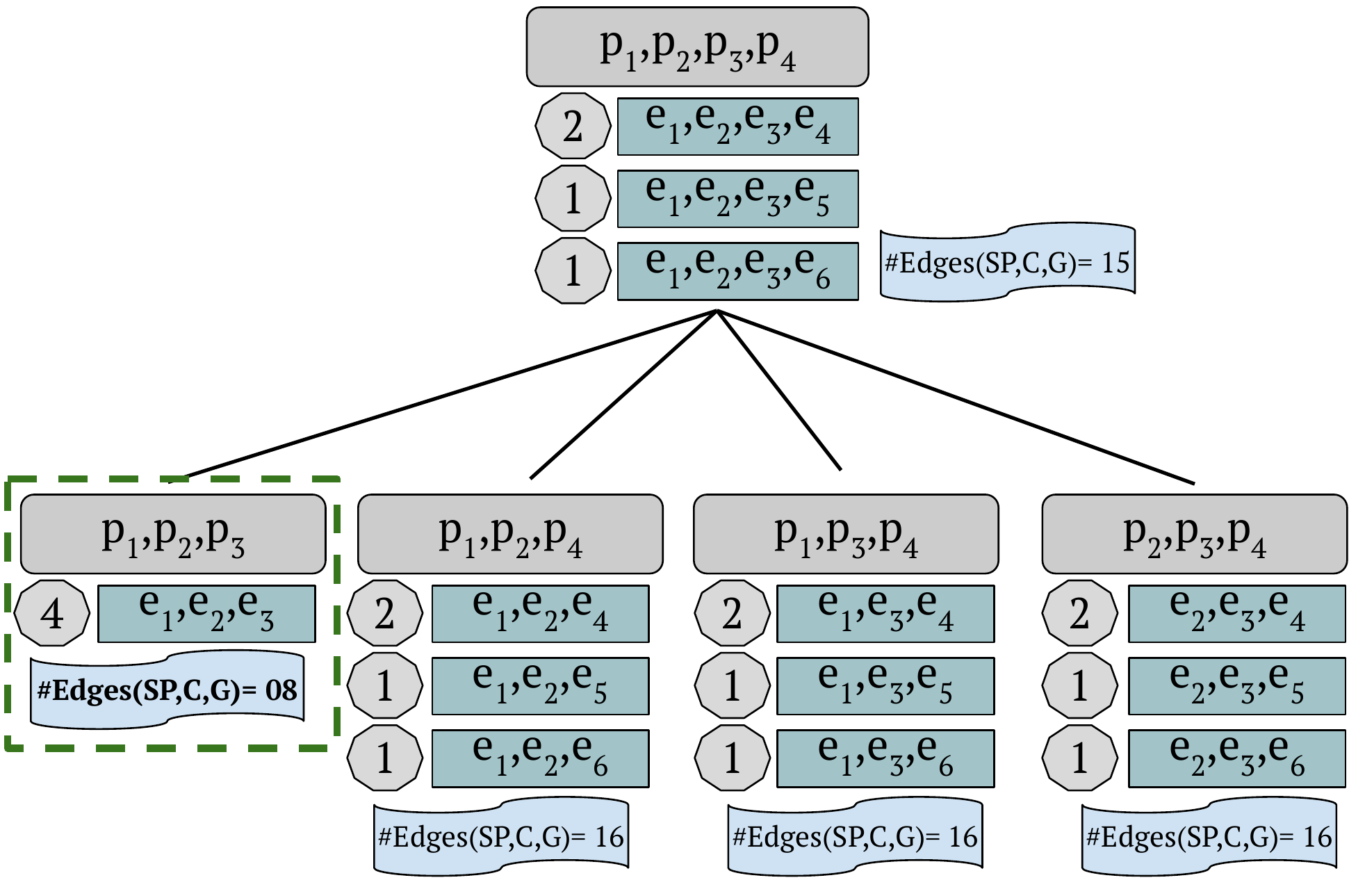}
      \label{fig:greedyExmp}}
\caption{{\bf Frequent Star Patterns Detection}. \emph{E.FSP} and \emph{G.FSP} iterate over the star patterns in the RDF graph in Figure~\ref{fig:duplicates} to detect the frequent star patterns. (a) \emph{E.FSP} exhaustively iterates over the whole search space of frequent patterns; (c) \emph{G.FSP} iterates the search space without generating all the star patterns.}
    \label{fig:algoExamples}
\end{figure*}

To solve the \emph{FSP detection} problem, we propose two algorithms that perform iterations over frequent patterns involving different sets of properties sets of a class $C$ in an RDF graph $G$, and the class entities.
\emph{E.FSP}, presented in Algorithm~\ref{algo:exhfsgDetection}, resorts to a frequent pattern mining algorithm like gSpan. \emph{E.FSP} exploits breadth first search technique to exhaustively traverse the search space of frequent patterns generated by the frequent pattern mining algorithm, and always finds the best frequent star patterns. 
Figure~\ref{fig:exhExmp} illustrates the iterations performed by \emph{E.FSP} to find the frequent star patterns in the RDF graph in Figure~\ref{fig:duplicates}. \emph{E.FSP} receives a dictionary $\textit{subgraphsDict}$ of all the subgraphs over the subsets of the set $S$ of properties in the class $C$ in an RDF graph $G$. The keys of the dictionary $\textit{subgraphsDict}$ are the combination of properties in the subsets of $S$, and the dictionary values are the subgraphs involving the properties from the corresponding keys.
\emph{E.FSP} generates frequent star patterns and a set of properties involved in the frequent star patterns. 
\emph{E.FSP} initializes the variables $\textit{fsp}$, $SP$, $\textit{minEdges}$, and $\textit{subsetCard}$ in line 1. The variables $\textit{minEdges}$ and $\textit{subsetCard}$ are initialized with values $0$ and cardinality of $S$, respectively. From lines 2-18, \emph{E.FSP} iterates over all the subgraphs involving two or more properties to find the frequent star patterns. In Figure~\ref{fig:exhExmp}, \emph{E.FSP} starts iterations with the set of properties $SP=\{p_1,p_2,p_3,p_4\}$, and the subgraphs involving the properties in subsets of $SP$, where the cardinality of subsets is equal to the cardinality of $S$, i.e., four (line 3). The generated subset contains all the properties in $SP$, i.e., $\{p_1,p_2,p_3,p_4\}$, and the total number of edges \textit{totalEdges} in $SP$ is computed i.e., 16 (line 5-6). Since $\textit{minEdges}$ are $0$, therefore, the value 16 of $\textit{totalEdges}$ is assigned to $\textit{minEdges}$, subgraphs over $SP=\{p_1,p_2,p_3,p_4\}$ and $SP$ are assigned to $fsp$ and $bestSP$, respectively (line 7-10). 
At line 17, the subset size $subsetSize$ is reduced by one in order to generate the subsets of properties of $S$ with the cardinality one less the cardinality of $S$ i.e., three.  The subsets $\{p_1,p_2,p_4\}$, $\{p_1,p_3,p_4\}$, and $\{p_2,p_3,p_4\}$, of cardinality three, generate more number of edges i.e., value of $\textit{totalEdges}$ is 17, than the minimum number of edges $\textit{minEdges}$, i.e., 16, and are not selected as the best sets of properties. However, the subgraphs over the subset $\{p_1,p_2,p_3\}$ contain 11 number of triples, which is less than 16 the value of $\textit{minEdges}$. Therefore, \emph{E.FSP} selects $\{p_1,p_2,p_3\}$ as the best set of properties and the corresponding subgraphs as the frequent star patterns (line 11-15). Once all the subsets $SP$ of $S$ with cardinality three, are evaluated, the value of $subsetCard$ is reduced by one i.e., two, and the subsets of cardinality two are evaluated in the next iteration. Figure~\ref{fig:exhExmp} presents that all the subsets of cardinality two generate larger values, i.e., 14 and 18, for $\textit{totalEdges}$ than the value 11 for $\textit{minEdges}$. Therefore, none of the subsets of properties of cardinality two contains the frequent star patterns. Further, all the subsets of cardinality greater or equal to two have been evaluated, \emph{E.FSP} stops and returns $\{p_1,p_2,p_3\}$ as the best set of properties and the corresponding subgraphs as the frequent star patterns (line 19-20).

\begin{algorithm}
\caption{G.FSP Algorithm}
\label{algo:greedyfsgDetection}
\begin{algorithmic}[1]
    \REQUIRE A set $S$ of properties of class $C$ in $G$, and a list $\textit{starList}$ of star patterns over properties in $S$. 
  \ENSURE Frequent star patterns \textit{fsp}, A set $SP$ of properties.
  \STATE $\textit{fsp} \leftarrow []$, $\textit{starList}' \leftarrow []$, $SP \leftarrow S$, $SP' \leftarrow \emptyset$, $\textit{fValue} \leftarrow \textit{fValue}' \leftarrow 0$
  \REPEAT 
    \IF{$|SP|\geq2$}
        \IF{$AMI_G(SP|C)==1$}
            \STATE $\textit{fsp} \leftarrow \textit{starList}$   
            \RETURN {\textit{fsp}, $SP$}
        \ELSE
            \STATE $\textit{fValue} \leftarrow \#\textit{Edges}(SP,C,G)$
            \FOR {$p \in SP$}
                \STATE   $SP' \leftarrow SP-\{p\}$
                \IF{$|SP'|\geq2$}
                    \STATE   Create $\textit{starList}'$ over $SP'$ using $\textit{starList}$
                    \STATE  $value \leftarrow \#Edges(SP',C,G)$
                    \IF{$AMI_G(SP'|C) ==1$}
                        \STATE  $\textit{fValue}' \leftarrow \textit{value}$
                        \STATE $\textit{bestSP} \leftarrow SP'$
                        \STATE $\textit{bestSList} \leftarrow \textit{starList}'$
                        \STATE \textit{break}
                    \ELSIF{$\textit{value} < \textit{fValue}'$}
                        \STATE $\textit{fValue}' \leftarrow \textit{value}$
                        \STATE $\textit{bestSP} \leftarrow SP'$
                        \STATE $\textit{bestSList} \leftarrow \textit{starList}'$
                    \ENDIF
                \ENDIF
            \ENDFOR
        \ENDIF
    \ENDIF
        \STATE $\textit{starList} \leftarrow \textit{bestSList}$, $SP \leftarrow \textit{bestSP}$
 \UNTIL{$\textit{fValue}' > \textit{fValue}$}
 \STATE $\textit{fsp} \leftarrow \textit{starList}$
\RETURN{$\textit{fsp}$, $SP$}
\end{algorithmic}
\end{algorithm}

\emph{G.FSP}, presented in Algorithm~\ref{algo:greedyfsgDetection}, adopts a greedy algorithm to traverse the search space without generating all the frequent patterns. \emph{G.FSP} starts iterations using a set $SP$ of properties containing all the properties in $S$ of a class $C$ in an RDF graph $G$. \emph{G.FSP} computes the value of Formula~\ref{eq:minProblem} for $SP$ and iterates over the subsets $SP'$ of cardinality one less the cardinality of $SP$ and computes Formula~\ref{eq:minProblem} for each of subsets $SP'$.
A property subset $SP'$ with a smaller formula value than the formula value of $SP$, is selected as the best set of properties in that iteration, and is used in the next iteration to check the subsets of cardinality one less the cardinality of the selected set of properties. The iterations are performed until the cardinality of the selected subset of properties is less than two. 
Based on the property presented in Theorem~\ref{theorem:formula}, \emph{G.FSP} stops, if none of the subsets $SP'$ generates less value for formula than the formula value of $SP$.
In addition, \emph{G.FSP} stops whenever the cardinality of the set of properties is less than two, or the multiplicity of star patterns $AMI_G(SP|C)$ is one.
\emph{G.FSP} receives a set $S$ of properties in class $C$ in an RDF graph $G$, and a list $starList$ of star patterns involving properties in $S$.
\emph{G.FSP} returns frequent star patterns $\textit{fsp}$ and a set of properties $SP$ involved in the frequent star patterns.
Figure~\ref{fig:greedyExmp} shows the iterations performed by \emph{G.FSP} to detect the frequent star patterns in the RDF graph in Figure~\ref{fig:duplicates}. \emph{G.FSP} initializes all the variables at line 1, where $SP$ is assigned the set $S$ of properties for the first iteration i.e., $SP=\{p_1,p_2,p_3,p_4\}$. In lines 2-29, \emph{G.FSP} iterates over the subsets of $SP$ to find the frequent star patterns based on the criteria in Formula~\ref{eq:minProblem}. The cardinality value four of $SP$ is greater than two (line 3), and $AMI_G(SP|C)$ is not equal to one (line 4-7), therefore, \emph{G.FSP} computes the value of $\#Edges(SP,C,G)$ of $SP$ i.e., $15$ (line 8). In lines 9-25, \emph{G.FSP} iterates over the subsets of $SP$ of cardinality one less the cardinality of $SP$ to find the best set of properties for the next iteration. At line 10, a property $p$ is removed from $SP$ to generate a subset $SP'$ e.g., by removing $p_1$ a subset $SP'=\{p_2,p_3,p_4\}$ is generated. Since the cardinality of $SP'$ is more than two, therefore, a star list $starList'$, representing the star patterns over $SP'$, is created using $starList$ (line 12-13). The value of $\#Edges(SP',C,G)$ for $SP'$ is computed i.e., 16 (line 13). For $SP'$, $AMI_G(SP'|C)$ is not one, and the value 16 of $\#Edges(SP',C,G)$ for $SP'$ is not less than the value 15 of $\#Edges(SP,C,G)$ for $SP$, therefore, the star patterns over $SP'=\{p_2,p_3,p_4\}$ do not involve frequent star patterns and $SP'$ is not a best candidate for the next iteration. Similarly, the property subsets $\{p_1,p_3,p_4\}$ and $\{p_1,p_2,p_4\}$, generated from $SP$ by removing $p_2$ and $p_3$, respectively,  give a higher value 16 for $\#Edges(SP',C,G)$ and the star patterns over these set of properties are not better than the star patterns over $SP$. However, $SP'=\{p_1,p_2,p_3\}$, generated from $SP$ by removing $p_4$, gives one star pattern, therefore, the star pattern involving properties in $SP'$ is returned as the frequent star pattern without performing more iteration (line 14-18). 
In case, the set $SP'$ of properties is involved in more than star patterns and the formula value of $SP'$ smaller than the value of $SP$, then $SP'$ is selected for the next iteration (line 19-23).
\emph{G.FSP} stops and no further iterations are performed if none of the subsets $SP'$ of $SP$ generates a smaller value for $\#Edges(SP',C,G)$ than $\#Edges(SP,C,G)$. \emph{G.FSP} returns the star patterns involving $SP$, with a minimum value for $\#Edges(SP,C,G)$, as the frequent star patterns, and $SP$ as the best set of properties.
\emph{E.FSP} and \emph{G.FSP} work under the following assumptions: (a) all RDF molecules are complete, i.e., all class entities have values for all the properties, (b) all the properties are functional. In addition to these assumptions, \emph{G.FSP} has one more assumption: (c) if there are ties while deciding between the sets of  properties, only one will be selected.
Complexity of \emph{E.FSP} is exponential, i.e., $2^n$. \emph{G.FSP} adopts a Greedy approach and prunes the search space by selecting only the best set of properties during each iteration until the stop condition is met, i.e., no better set of properties with a minimum formula value can be found. In the worst case, the computational complexity of \emph{G.FSP} is $\sum_{i=0}^{n} (n-i)$=$\frac{n(n+1)}{2}$, where $n$ is the cardinality of the input set of properties. The complexity of \emph{G.FSP} grows linearly with the increase in the size of the input set of properties.

\begin{figure*}[t!]
\centering
     \vspace{0pt}\subfloat[ Transformation Rules for Class $C$]{
      \includegraphics[width=.5\textwidth]{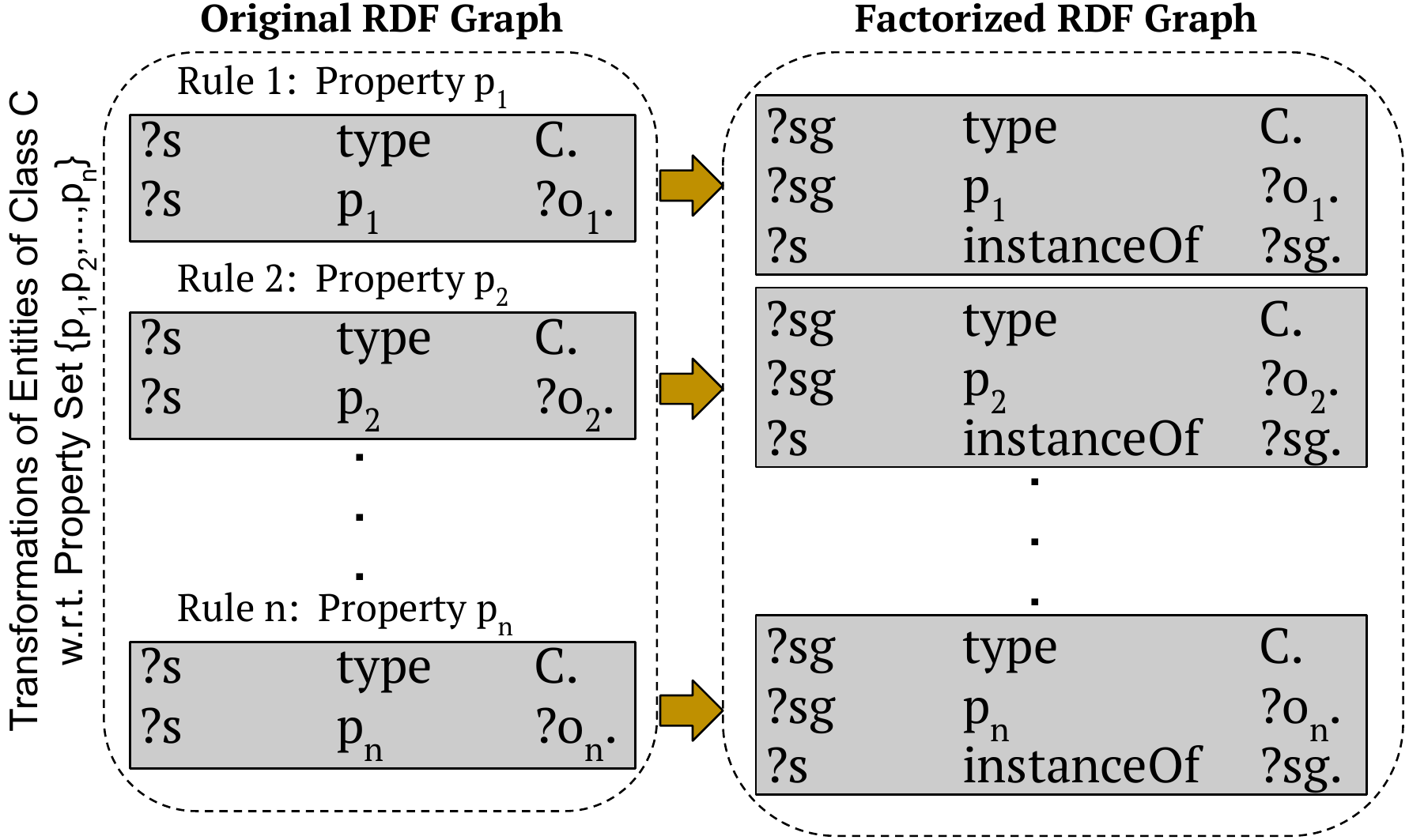}
      \label{fig:rules}}
      \vspace{0pt}\subfloat[Original and Factorized RDF Graphs]{
      \includegraphics[width=.49\textwidth]{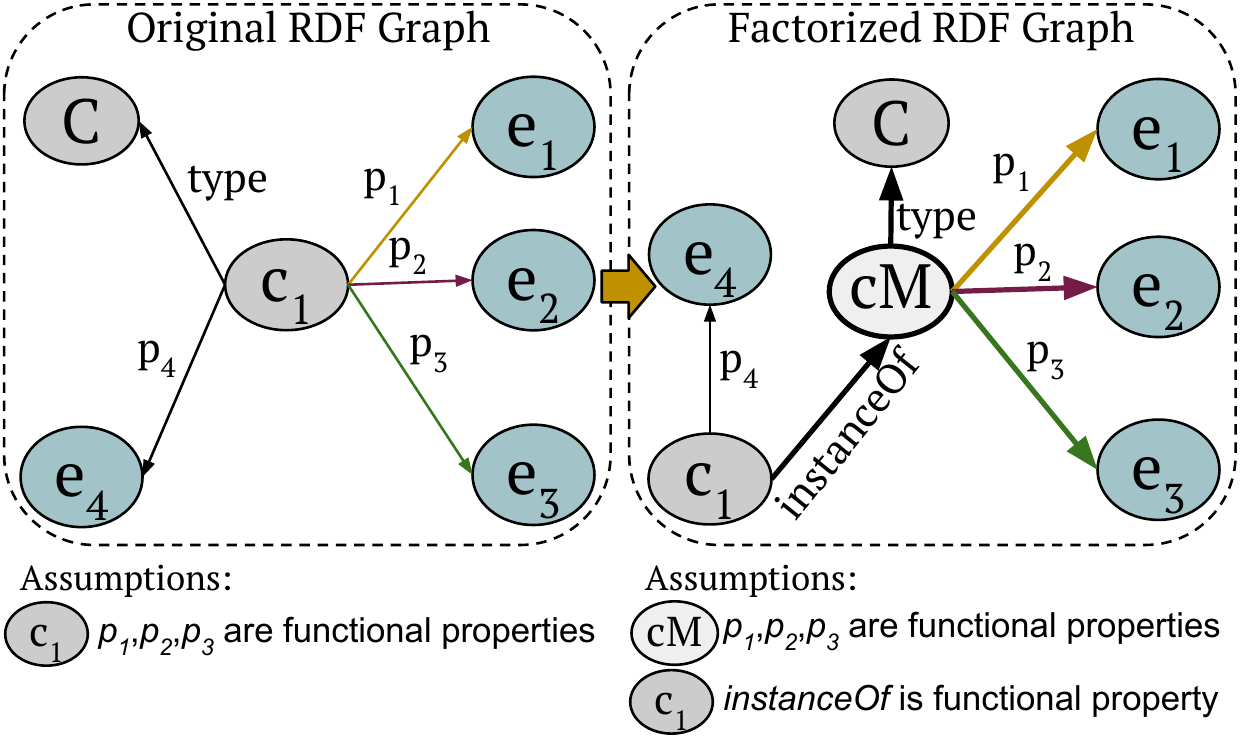}
      \label{fig:graphtransform}}
    \caption{{\bf Transformations in RDF Graph}. Transformation rules preserved between original and factorized RDF graphs. (a) Transformation rules over the properties $p_1,p_2,\dots,p_n$; (b) Portions of RDF graphs (original and factorized). Nodes and edges added to create the factorized RDF graph, are highlighted in bold.}
    \label{fig:transformation}
\end{figure*}

\subsection{A Factorization Approach}

\begin{algorithm}
\caption{The Factorization Algorithm}
\label{algo:factorization}
\begin{algorithmic}[1]
  \REQUIRE An RDF graph $G(V,E,L)$, A class $C$, A set $SP$ of properties from \emph{E.FSP} Algorithm~\ref{algo:exhfsgDetection} or \emph{G.FSP} Algorithm~\ref{algo:greedyfsgDetection}
  \ENSURE  Factorized\;RDF\;Graph $G'(V',E',L')$ and entity mappings $\mu_N$
  \STATE $\mu_N \longleftarrow \emptyset, V' \longleftarrow \emptyset, E' \longleftarrow \emptyset, L' \longleftarrow \emptyset$
  
  \FORALL{$o_1, o_2,\dots,o_n \in V such\; that\; SS=\{s|p_1,p_2,\dots,p_n \in SP AND$ \\ 
$(s\; \texttt{type}\; \texttt{C}) \in G, (s\; {p_1}\; o_1) \in G,
(s\; {p_2}\; o_2) \in G \dots, (s\; {p_n}\; o_n) \in G\}$}
\STATE $sg \leftarrow SurrogateEntity()$
\FOR{$ss \in SS$}
\STATE $\mu_N\leftarrow\mu_N \cup \{(ss,sg)\}$ 
\ENDFOR
\ENDFOR
\FOR{$(s\; p\; o) \in E \land s, o \in V$}
    \IF{$\mu_N(s) \neq \emptyset$} 
        \STATE \COMMENT {Create compact RDF molecule}
        \IF{$p == type$}
            \STATE $E' \leftarrow E' \cup \{(s\;instanceOf\; \mu_N(s)),$ $(\mu_N(s)\;p\;o)\}$
             \STATE $V' \leftarrow V' \cup \{s,\mu_N(s),o\}$
            \STATE $L' \leftarrow L' \cup \{p,instanceOf\}$
        \ELSIF{$p \in SP$}
            \STATE $E' \leftarrow E' \cup \{(\mu_N(s)\;p\;o)\}$
             \STATE $V' \leftarrow V' \cup \{\mu_N(s),o\}$
            \STATE $L' \leftarrow L' \cup \{p\}$
        \ELSE
            \STATE $E' \leftarrow E' \cup \{(s\;p\;o)\}$
            \STATE $V' \leftarrow V' \cup \{s,o\}$
            \STATE $L' \leftarrow L' \cup \{p\}$
        \ENDIF
    \ELSE 
        \STATE $E' \leftarrow E' \cup \{(s\;p\;o)\}$
        \STATE $V' \leftarrow V' \cup \{s,o\}$
        \STATE $L' \leftarrow L' \cup \{p\}$
    \ENDIF
\ENDFOR
\RETURN {$G'(V',E',L'),\mu_N$ }
\end{algorithmic}
\end{algorithm}

We present a solution to the problem of factorizing RDF graphs describing data using ontologies.
A sketch of the proposed method is presented in Algorithm~\ref{algo:factorization}.
The algorithm receives an RDF graph $G=(V,E,L)$, a class $C$, and a set $SP'$ of properties from \emph{E.FSP} or \emph{G.FSP}, and generates a factorized RDF graph $G'=(V',E',L')$, and the entity mappings $\mu_N$ from the entities of class $C$ in $V$ in RDF graph $G$ to the surrogate entities in $V'$ in RDF graph $G'$. The algorithm initializes the set of mappings $\mu_N$, the set of nodes $V'$, the set of labeled edges $E'$ and the set of edge labels (properties) $L'$ of the factorized RDF graph $G'$ (line 1).
For all the entities of $C$ related to the same objects $o_1,o_2,\dots,o_n$ using edges annotated with properties $p_1,p_2,\dots,p_n$ in $SP'$, the algorithm creates a surrogate entity $sg$ for the corresponding compact RDF molecule in $G'$ (lines 2-3). 
In lines 4-6, the algorithm maps all the entities, that are related to $o_1,o_2,\dots,o_n$ using properties $p_1,p_2,\dots,p_n$ in $G$, to the surrogate entity in $\mu_N$.
Once all the mappings of the entities of $C$ in $G$ to the corresponding surrogate entities in $G'$ are in $\mu_N$, the factorized RDF graph $G'$ is created using $\mu_N$ (lines 8-29).
For each RDF triple $(s\;p\;o)$ in $E$, if entity mapping $\mu_N(s)$ is defined, then a compact RDF molecule is created. If $p$ is $type$, then the new edges $(s\; instance\;\mu_N(s))$, and $(\mu_N(s)\; p\; o)$ are added to $G'$ along with entities $s$, $o$ and the mapped surrogate entity of $s$, and the edge labels $p$ and $instanceOf$ (lines 11-14). If $p$ is in $SP$, the new edge $(\mu_N(s)\; p\; o)$, and entities $s$, $o$ and the edge label $p$ are added to $G'$ (lines 15-18). 
If entity mapping $\mu_N(s)$ are defined, however, $p$ is not in $SP$, or $p$ is not $type$, then the edge $(s\; p\; o)$ is added to $G'$ along with the corresponding nodes and the edge label (lines 19-23).  
If entity mapping $\mu_N(s)$ is not defined, then the edge $(s\; p\; o)$ and the nodes $s$ and $o$, and edge label $p$ are added to $G'$ (lines 24-28).

\begin{figure*}[t!]
\centering
     \vspace{0pt}\subfloat[\%age Decrease in Edges after Factorization]{
      \includegraphics[width=.5\textwidth]{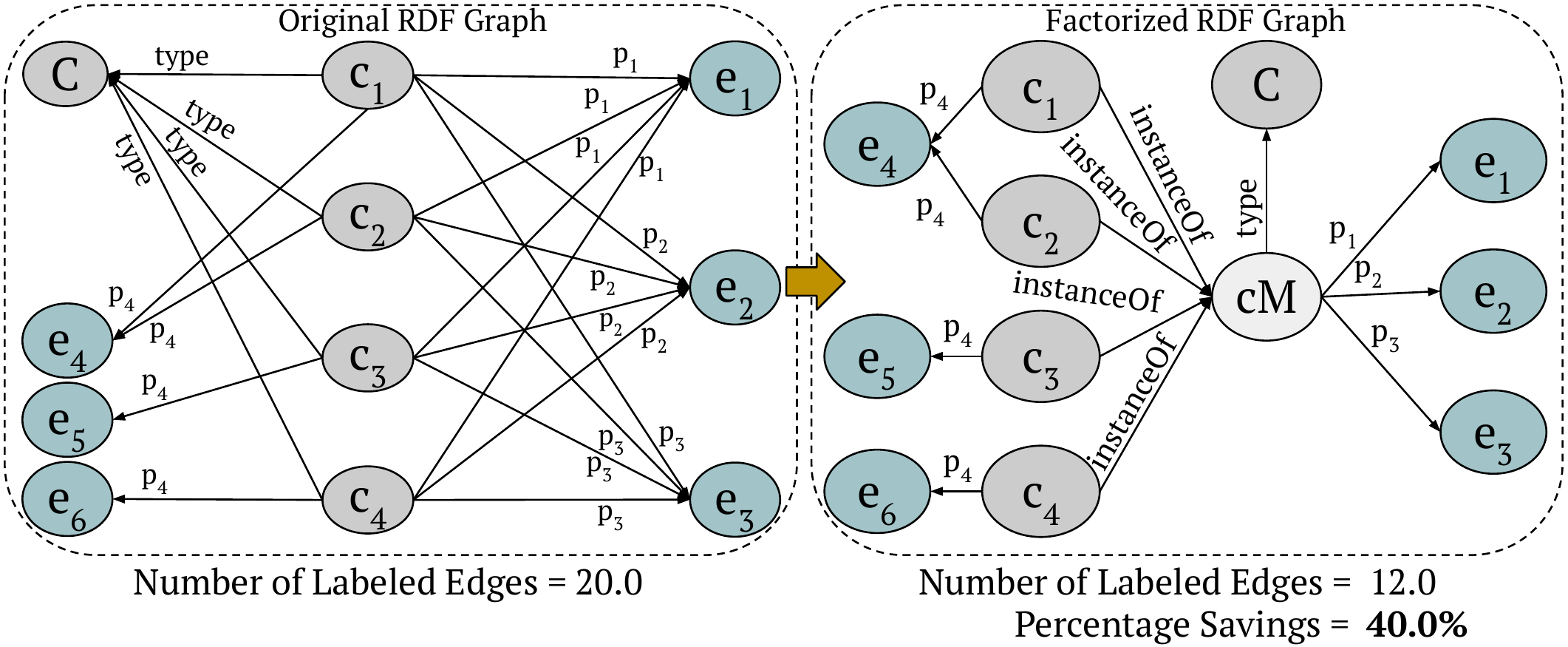}
      \label{fig:decrease}}
      \vspace{0pt}\subfloat[\%age Increase in Edges after Factorization]{
      \includegraphics[width=.5\textwidth]{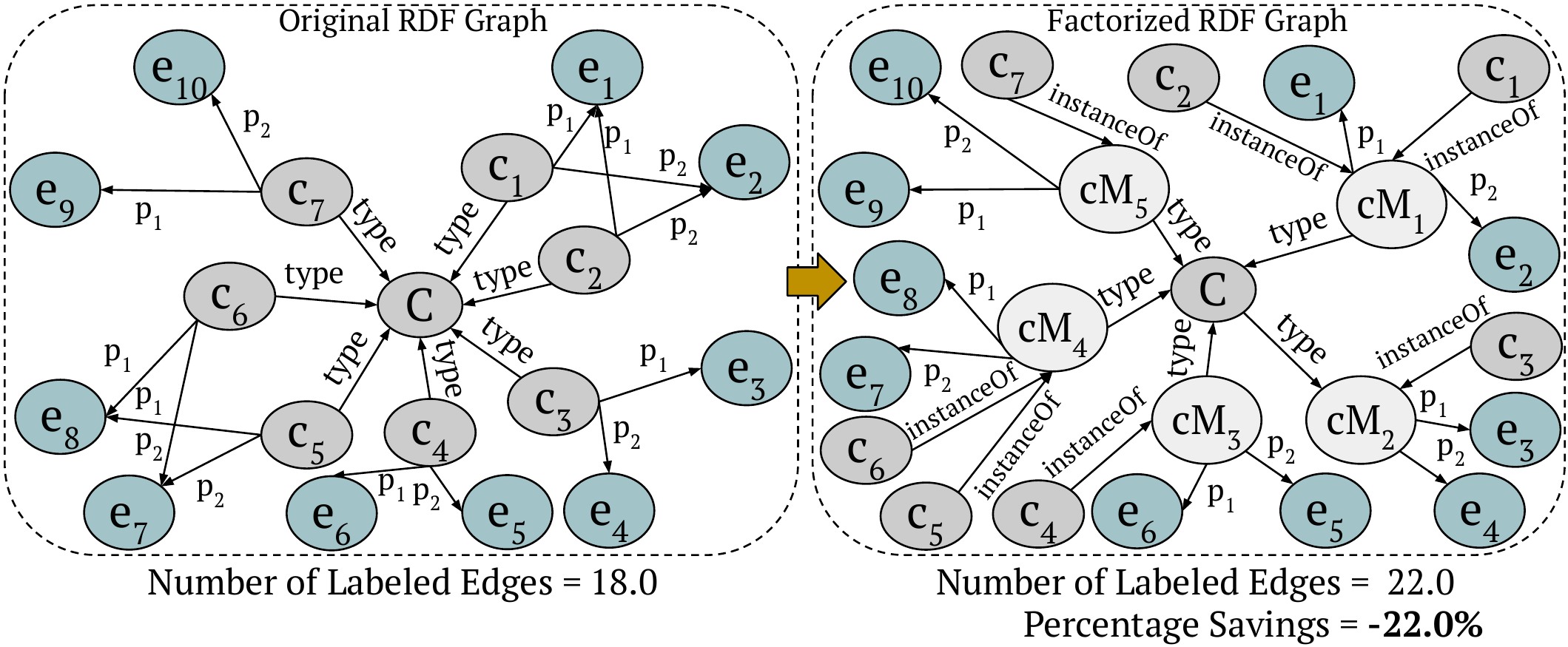}
      \label{fig:increase}}
    \caption{{\bf RDF Graph Factorization Overhead}. Factorization of RDF graphs is not worthy in all cases. (a) Entities of class $C$ in the original RDF graph match the frequent star pattern over the properties $p_1$, $p_2$, and $p_3$; (b) few entities match each star pattern over $p_1$ and $p_2$ causing factorization overhead.}
    \label{fig:overhead}
\end{figure*}

Figure~\ref{fig:transformation} depicts the transformations for the set $\{p_1,p_2,\\\dots,p_n\}$ of properties performed in an RDF graph whenever a corresponding factorized RDF graph is created. Figure~\ref{fig:rules} presents transformation rules; one rule for each property in $\{p_1,p_2,\dots,p_n\}$ of class $C$. Each rule states how the labeled edges associated with a $C$ in an original RDF graph are transformed into the edges in the factorized graph. Rule 1 asserts that the relation between an entity $s$ of $C$ with an object $o_1$ is not explicitly represented by one property in the factorized RDF graph. In order to retrieve $o_1$, a path across the labeled edges between entities $s$ and the corresponding surrogate entity $sg$ have to be traversed. Similarly, the rest of the transformation rules establish how explicit associations between entities of $C$ and the objects using properties $p_2,\dots,p_n$ in the original RDF graphs are represented by the path of labeled edges annotated with properties in the factorized RDF graphs. Algorithm \ref{algo:factorization} adds the corresponding labeled edges of these paths in lines 7-16. Furthermore, Figure~\ref{fig:graphtransform} presents a portion of the RDF graph in Figure~\ref{fig:duplicates} and corresponding transformation in the factorized RDF graph in Figure~\ref{fig:noduplicates}. The surrogate entity and the new labeled edges are highlighted in bold in the factorized RDF graph. 
The Algorithm~\ref{algo:factorization} creates the surrogate entity in line 4; new edges are created in line 10. Additionally, assumptions about the characteristics of the entity associations in the graph are presented. Some edges existing between the entities in RDF graph in Figure~\ref{fig:duplicates} are not present in the factorized RDF graph in Figure~\ref{fig:noduplicates}, these entity associations can be obtained by traversing the factorized RDF graph as indicated by the corresponding transformation rules in Figure~\ref{fig:rules}.

Figure~\ref{fig:overhead} illustrates an example of factorization overhead, i.e., a case when it is not worthy to factorize a class given a set of properties in an RDF graph.
Figure~\ref{fig:decrease} presents an example where savings are observed in the number of edges after factorization. The factorization of RDF graph in Figure~\ref{fig:decrease} for the class $C$ using the properties $p_1$, $p_2$, and $p_3$, reduces the number of edges from $20.0$ to $12.0$ and the positive value $40.0\%$ for percentage savings indicates a percentage decrease in the number of edges. Furthermore, the edge savings gained after factorization are high enough to compensate the addition of the surrogate entity $cM$ in the factorized RDF graph.  
In contrast, factorization of the RDF graph over the class $C$ using the properties $p_1$ and $p_2$ introduces an overhead, as shown in Figure~\ref{fig:increase},  by increasing the number of nodes and edges in the factorized RDF graph. The number of edges is increased from $18.0$ to $22.0$, shown in Figure~\ref{fig:increase}, after factorization and a negative value $-22.0\%$ for the percentage savings indicates an increase in the number of edges. The star patterns, detected in the original RDF graph, in Figure~\ref{fig:increase}, are replaced by the corresponding compact RDF molecules with the corresponding surrogate entity and new labeled edges (presented in Algorithm~\ref{algo:factorization}). Due to the high number of star patterns, the addition of the surrogate entities and new labeled edges increases the size of the factorized RDF graph. 

\begin{table*}[b]
\centering
\caption{{\bf Datasets}. (a) Statistics of the datasets with observations about several weather phenomena, collected from around 20,000 weather stations in the United States; (b) The number of labeled edges \textit{NLE(G)}, in the datasets obtained after gradually integrating the RDF datasets D1, D2, and D3 describing observations.}
\vspace{0pt}\subfloat[Statistics of datasets collected from around 20,000 weather stations in the US.]{
	\begin{tabular}{|c| c | c | c | c |}
				\hline
				\textbf{Dataset} &\textbf{Climate} & \textbf{Date} & \textbf{\#RDF} &\textbf{\# Obs} \\ 
				\textbf{ID} &\textbf{Event} & \textbf{} & \textbf{Triples} &\textbf{}   \\ \hline
				\textbf{D1} & Blizzard & April, 2003 & ~38,054,493 & ~~4,092,492 \\
				\textbf{D2} & Hurricane Charley & August, 2004 & 108,644,568 &  11,648,607 \\
				\textbf{D3} & Hurricane Katrina & August, 2005 & 179,128,407 & 19,233,458  \\ \hline
			\end{tabular}
\label{tbl:data}}
\\
\vspace{0pt}\subfloat[Number of Labeled Edges \textit{NLE(G)} in datasets.]{\small
\begin{tabular}{|ccc|}
\hline
\multicolumn{1}{|c}{\bf{Dataset}} & \multicolumn{1}{|c}{\bf{Observation}}&\multicolumn{1}{|c|}{\bf{Measurement}} \\
\multicolumn{1}{|c}{\bf{ID}} &\multicolumn{1}{|c}{\bf{NLE($G$)}}&\multicolumn{1}{|c|}{\bf{NLE($G$)}} \\ \hline
\multicolumn{1}{|l}{\bf{D1}}   & \multicolumn{1}{|r}{24,142,314} & \multicolumn{1}{|r|}{12,071,157}\\ \hline 
\multicolumn{1}{|l}{\bf{D1D2}}  &  \multicolumn{1}{|r}{93,286,824} & \multicolumn{1}{|r|}{46,643,412} \\ \hline
\multicolumn{1}{|l}{\bf{D1D2D3}} & \multicolumn{1}{|r}{207,630,306} & \multicolumn{1}{|r|}{103,815,153} \\\hline
\end{tabular}
\label{tab:NELOriginal}}
\label{tbl:datatsets}
\end{table*}

\begin{table*}[b]
\centering
\caption{{\bf Observation and Measurement Classes}. Sets of properties containing different properties of the \textit{Observation} and \textit{Measurement} (Meas.) classes in the SSN ontology, each set of properties is assigned a unique ID, e.g., \textit{A1} and \textit{A8}.}
\subfloat{
	\begin{tabular}{|c|p{60.4mm}|p{4mm}|}
\hline
 \multicolumn{1}{|l|}{\bf{Class}} & \multicolumn{1}{l|}{\bf{Set of Properties}}  & \multicolumn{1}{l|}{\bf{SID}} \\
\hline
\multirow{7}{*}{\rotatebox[origin=c]{90}{Observation}} & \{property\} & A1 \\ \cline{2-3}
 &  \{time\} & A2\\ \cline{2-3}
 & \{procedure, generatedBy\} & A3  \\ \cline{2-3}
 & \{property, procedure, generatedBy, time\} &  A4 \\ \cline{2-3}
 & \{property, procedure, generatedBy\} & A5 \\ \cline{2-3}
 & \{property, time\} & A6 \\ \cline{2-3}
 & \{procedure, time, generatedBy\} & A7 \\ \hline
\multirow{3}{*}{\rotatebox[origin=c]{90}{Meas.}} & \{value, unit\} & A8 \\ \cline{2-3}
  & \{value\} & A9 \\ \cline{2-3}
 &    \{unit\} & A10 \\ \hline
\end{tabular}
\label{tab:obsattributes}}
\label{tab:attributes}
\end{table*}

\section{Experimental Study}
\label{sec:experiment}
We study the effectiveness of the proposed techniques for detecting frequent star patterns.
Moreover, given a class, we evaluate the impact of the factorization techniques over the RDF graphs size by selecting several combinations of the properties in the class. 
We empirically assessed the following research questions:
\begin{inparaenum}[\bf {\bf RQ}1\upshape)]
    \item Are the proposed frequent star patterns detection techniques able to efficiently detect the frequent star patterns in RDF graphs?
    \item Are the proposed frequent star patterns detection techniques able to detect the frequent star patterns in RDF graphs?
    \item \\What is the impact of different combinations of properties of a class over the size of factorized RDF graphs?
     \item Are the proposed factorization techniques able to reduce the number of labeled edges in RDF graphs?
\end{inparaenum}
Our experimental configuration is as follows:
~\\
\noindent
\textbf{ Datasets.}  Evaluation is conducted on three \emph{LinkedSensorData} datasets~\cite{patni2010linked} semantically described using the Semantic Sensor Network (SSN) Ontology. 
These RDF datasets comprise observations and measurements of several climate phenomena, e.g., temperature, visibility, precipitation, rainfall, and humidity, collected during the hurricane and blizzard seasons in the United States in the years 2003, 2004, and 2005\footnote{Available at: \url{http://wiki.knoesis.org/index.php/LinkedSensorData}}.  
Table~\ref{tbl:data} describes the main characteristics of these RDF datasets.
Moreover, Figure~\ref{fig:distribution} shows percentage of repeated RDF triples with wind speed, temperature, and relative humidity values in dataset $D1D2D3$. The unit of measurement is same for each type of observation. These plots show that some of the large number of observed values are highly repeated in the datasets. Further, values are not discretized to produce the same query answers.

\begin{figure*}[t!]
\centering
     \vspace{-0.5pt}\subfloat[\%age of Windspeed Repeated Triples in $D1D2D3$]{
      \includegraphics[width=0.31\textwidth]{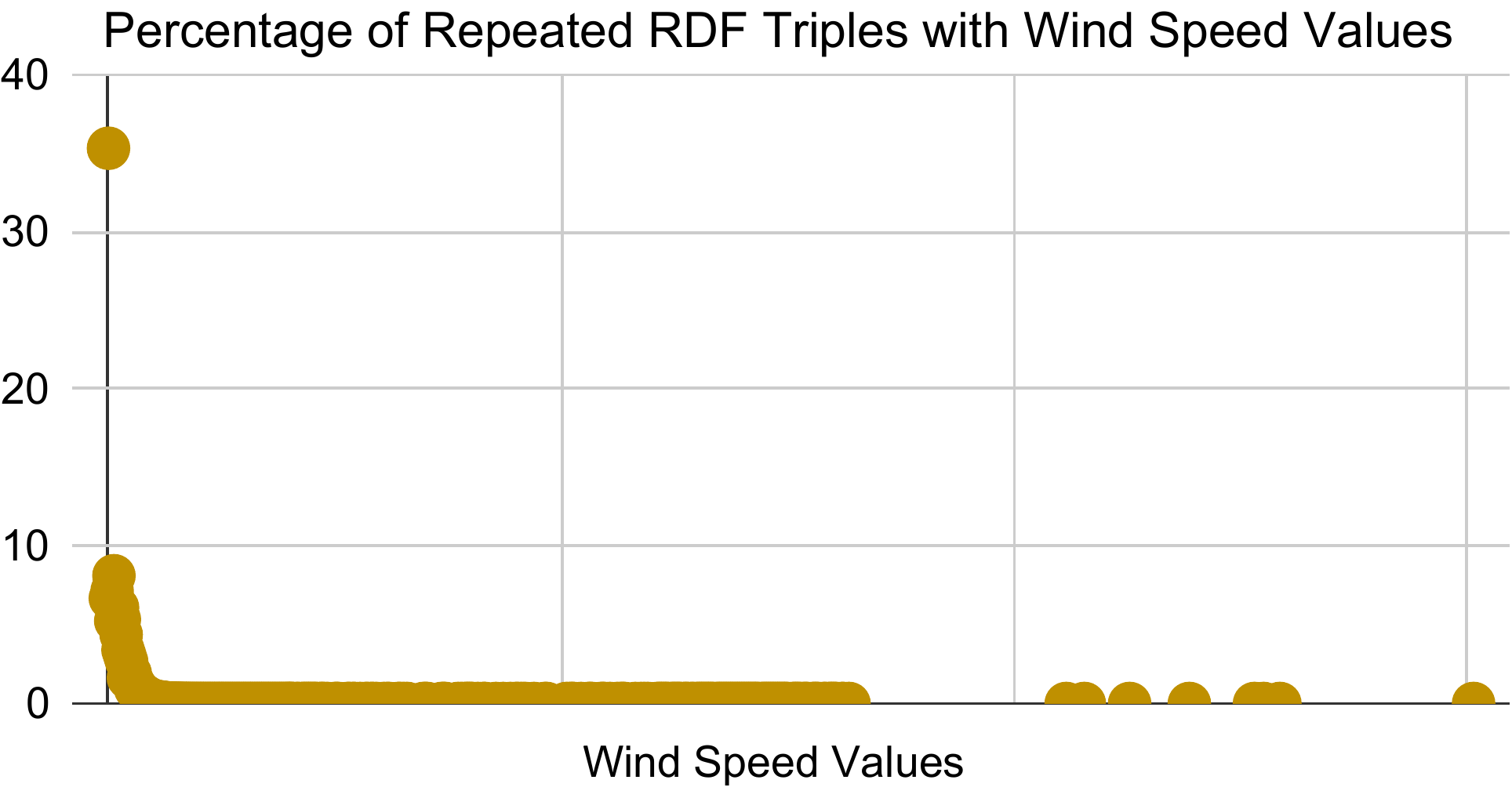}
      \label{fig:distWS}}
       \vspace{-0.50pt}
       \vspace{0pt}\subfloat[\%age of Temperature Repeated Triples in $D1D2D3$]{
      \includegraphics[width=0.31\textwidth]{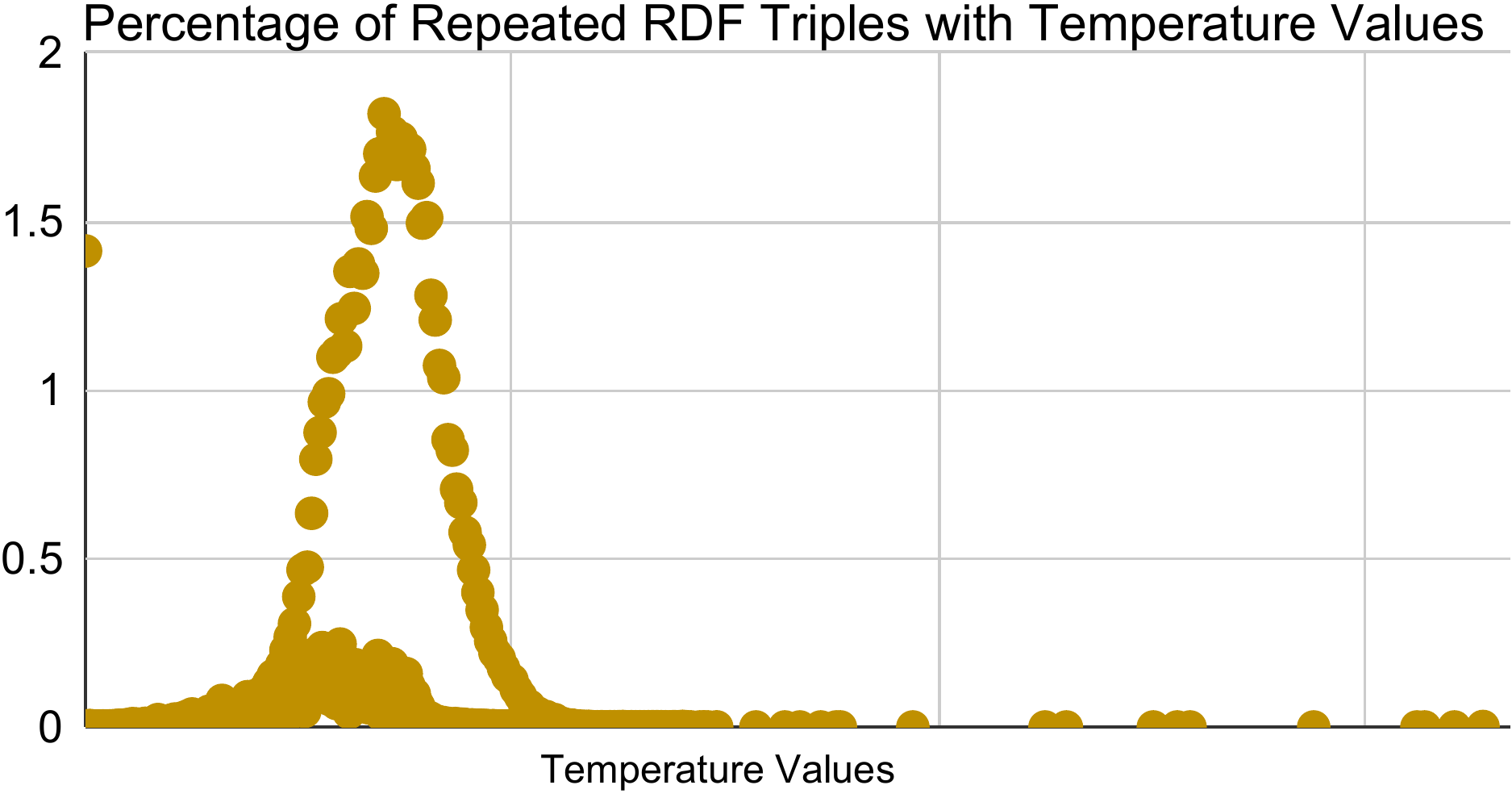}
      \label{fig:distTemp}}
       \vspace{-0.50pt}
  \subfloat[\%age of Relative Humidity Repeated Triples in $D1D2D3$]{
      \includegraphics[width=0.31\textwidth]{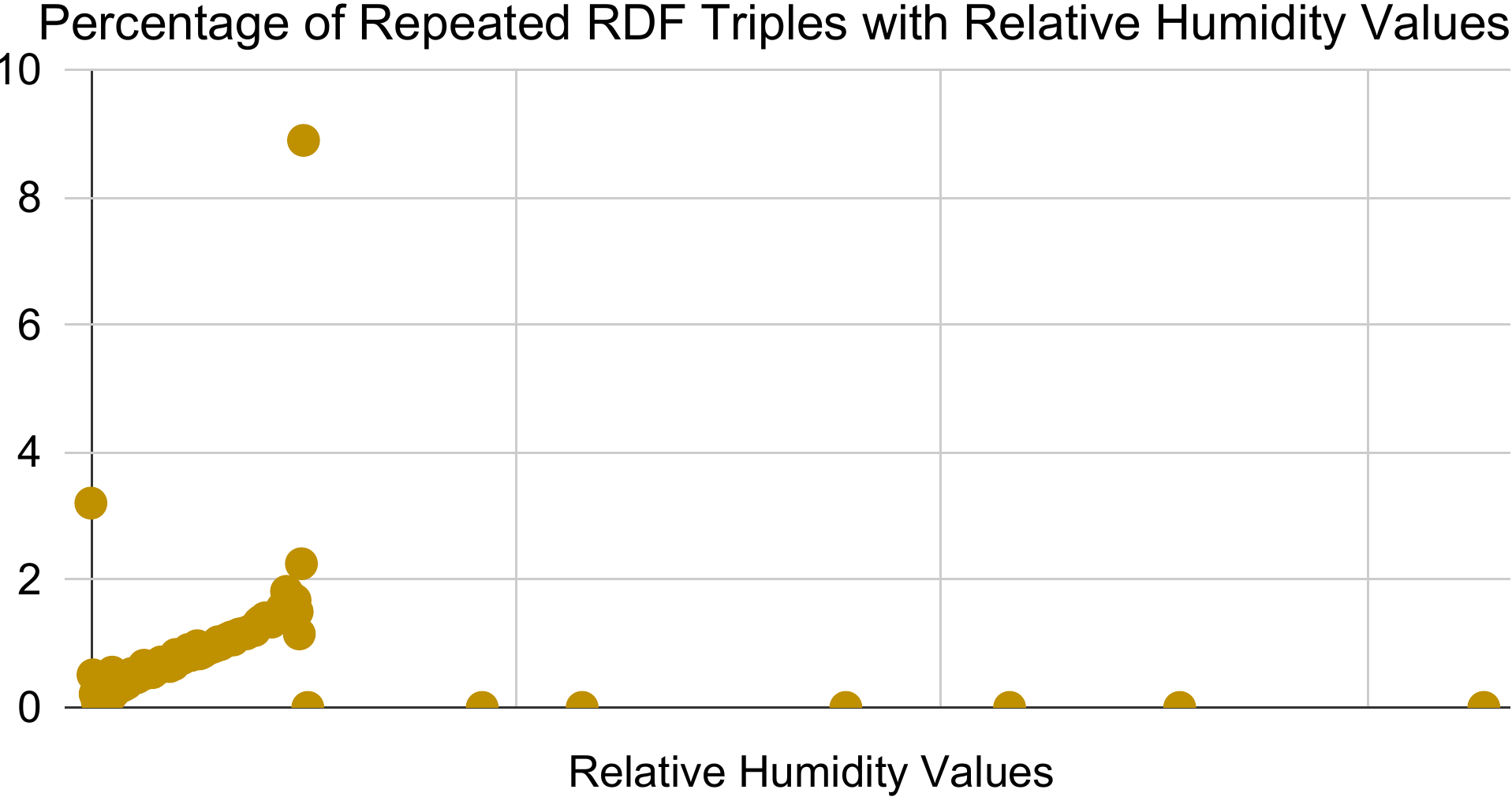}
      \label{fig:distRH}}
   \caption{{\bf Percentage of Repeated RDF Triples with Observation Values}. Few of the large number of values are highly repeated. (a) Percentage of repeated RDF triples with windspeed values; (b) Percentage of repeated triples with temperature values; (c) Percentage of repeated triples with relative humidity values.}
\label{fig:distribution}
\end{figure*} 

\noindent
\textbf{ Metrics.} We measure the results of our empirical evaluation in terms of number of nodes and edges in an RDF graph. The size of an RDF graph is presented as the sum of nodes and edges in the graph, where the nodes correspond to the entities and objects, whereas the edges are labeled edges annotated with the properties of a class in an RDF graph. In our empirical evaluation, we report on the following metrics:
\begin{inparaenum}[\bf a\upshape)]
\item {\bf Execution Time (Exec.Time(ms))} is the time required to find the frequent star patterns in RDF graphs.
\item {\bf Number of Nodes (NN)} is the number of \textit{Observation} and \textit{Measurement} entities and objects in RDF graphs.
\item {\bf Number of Labeled Edges (NLE)} represents the number of labeled edges annotated with the properties in \textit{Observation} and \textit{Measurement} classes in RDF graphs.
\item {\bf Percentage Savings in the Number of Labeled Edges (\%Savings)} stands for the percentage increase or decrease in the number of labeled edges using a positive or a negative value, respectively. 
The interpretation of the metric \textbf{\%Savings} is, higher is better.
\end{inparaenum}

\noindent
\textbf{Implementation.}
The experiments were performed on a Linux Debian 8 machine with a CPU Intel Xeon(R) Platinum 8160 2.10GHz and 754GB RAM.
The datasets are factorized for \textit{Observation} and \textit{Measurement} classes using all possible combinations of the properties in each class.
Table~\ref{tab:attributes} shows the set of properties for \textit{Observation} and \textit{Measurement} (Meas.), respectively, in the SSN ontology. Each set of properties is assigned a set identification string \textit{SID}, and are referred with the corresponding identification string in the paper. 
\textit{Observation} contains \textit{property}, \textit{procedure}, \textit{generatedBy}, and \textit{time} property. \textit{procedure} and \textit{generatedBy} are symmetric properties and are considered together in the sets. Similarly, in \textit{Measurement}, sets of properties contain the properties \textit{value} and \textit{unit}.
Further, for experiments, datasets are gradually merged to increase datasets size. The source code is available at https://github.com/SDM-TIB/Graph-Factorization.

\subsection{Efficiency of Frequent Star Patterns Detection Approach}
For evaluating the efficiency of the proposed frequent star patterns techniques and to answer the research question {\bf RQ1}, we execute \emph{E.FSP} and \emph{G.FSP} over five percent of RDF triples from dataset $D1$. The dataset of the selected RDF triples describe the \textit{Measurement} and \textit{Observation} classes, where several different types of observations from the \textit{Observation} class are included in the dataset. gSpan~\cite{yan2002gspan} is used to generate the frequent patterns space for \emph{E.FSP}, which iterates over all the generated frequent patterns. To evaluate the efficiency of two approaches, we selected five percent of RDF triples from the dataset $D1$; this number was chosen as a timeout because gSpan was able to generate the frequent patterns within thirty minutes. Efficiency comparison in terms of execution time of \emph{E.FSP} and \emph{G.FSP} is reported in Table~\ref{tab:gspanVSfsp}. 
\emph{G.FSP} finds the frequent star patterns without generating all the star patterns involving all the possible subsets of properties.
Table~\ref{tab:gspanVSfsp} shows for \emph{E.FSP} and \emph{G.FSP}, the number of iterations over sets of properties \textit{PSIterations}, the number of frequent star patterns detected \textit{\#FSP}, and the execution time in milliseconds \textit{Exec.Time(ms)} required to detect the frequent star patterns. The results indicate that \emph{E.FSP} and \emph{G.FSP} detect the same frequent star patterns. The frequent star patterns, detected by \emph{E.FSP} and \emph{G.FSP}, are over the set of properties $A5$ and $A8$ for all the different observations in the \textit{Observation} class, and the \textit{Measurement} class, respectively. Execution time of  \emph{G.FSP} to detect frequent star patterns is less by at least three orders of magnitude than the execution time of \emph{E.FSP}, e.g., \emph{G.FSP} detects frequent star patterns in measurement class in \num{1.9e2} milliseconds, whereas \num{5.3e5} milliseconds are required using \emph{E.FSP}. 

\begin{table*}[b!]
   \caption{{\bf Efficiency of Frequent Star Patterns Detection.} \emph{E.FSP} and \emph{G.FSP} are used to detect the frequent star patterns for the \textit{Observation} and \textit{Measurement} classes in the five percent of RDF triples from the dataset $D1$. \emph{E.FSP} and \emph{G.FSP} detect the same frequent star patterns involving the sets $A5$ and $A8$ of properties from the \textit{Observation} and \textit{Measurement} classes, respectively. \emph{G.FSP} takes less time to identify the same frequent star patterns than the time taken by \emph{E.FSP}.}
  \centering
  \subfloat{
  \resizebox{\textwidth}{!}{%
  \begin{tabular}{|llllllll|}
 \cline{3-8}
  \multicolumn{1}{c}{}&\multicolumn{1}{c}{}  & \multicolumn{2}{|c}{\bf{PSIterations}} &  \multicolumn{2}{|c}{\bf{\#FSP}} & \multicolumn{2}{|c|}{\bf{Exec.Time(ms)}}  \\ \hline
  
 \multicolumn{2}{|c}{\bf{Class}} & \multicolumn{1}{|c}{\bf{E.FSP}} & \multicolumn{1}{|c}{\bf{G.FSP}}  & \multicolumn{1}{|c}{\bf{E.FSP}} & \multicolumn{1}{|c}{\bf{G.FSP}} & \multicolumn{1}{|c}{\bf{E.FSP}} & \multicolumn{1}{|c|}{\bf{G.FSP}}  \\ \hline
  
  \multirow{9}{*}{\rotatebox[origin=c]{90}{Observation}} &\multicolumn{1}{|c}{Precipitation} &    \multicolumn{1}{|r}{8} & \multicolumn{1}{|r}{4}   &  \multicolumn{1}{|r}{23}   & \multicolumn{1}{|r}{23}&  \multicolumn{1}{|r}{\num{2.1e4}}   & \multicolumn{1}{|r|}{\num{1.5e1}}    \\ \cline{2-8}
 
  &\multicolumn{1}{|c}{Pressure} &    \multicolumn{1}{|r}{5} & \multicolumn{1}{|r}{4}  &  \multicolumn{1}{|r}{183}   & \multicolumn{1}{|r}{183} &  \multicolumn{1}{|r}{\num{1.3e6}}   & \multicolumn{1}{|r|}{\num{7.1e2}} \\ \cline{2-8}
 
 & \multicolumn{1}{|c}{Rainfall} &  \multicolumn{1}{|r}{5} & \multicolumn{1}{|r}{4}  &  \multicolumn{1}{|r}{533} & \multicolumn{1}{|r}{533} &  \multicolumn{1}{|r}{\num{1.3e6}} & \multicolumn{1}{|r|}{\num{8.0e2}} \\ \cline{2-8}
 
 & \multicolumn{1}{|c}{RelativeHumidity} &  \multicolumn{1}{|r}{5} & \multicolumn{1}{|r}{4}  & \multicolumn{1}{|r}{341} & \multicolumn{1}{|r}{341} & \multicolumn{1}{|r}{\num{1.3e6}} & \multicolumn{1}{|r|}{\num{7.5e2}} \\ \cline{2-8}
 
  &\multicolumn{1}{|c}{Snowfall} &  \multicolumn{1}{|r}{8} & \multicolumn{1}{|r}{4}  & \multicolumn{1}{|r}{382}  & \multicolumn{1}{|r}{382} & \multicolumn{1}{|r}{\num{9.2e5}}  & \multicolumn{1}{|r|}{\num{3.1e2}} \\ \cline{2-8}
 
  &\multicolumn{1}{|c}{Temperature} &  \multicolumn{1}{|r}{5} & \multicolumn{1}{|r}{4}   & \multicolumn{1}{|r}{395}  & \multicolumn{1}{|r}{395} & \multicolumn{1}{|r}{\num{1.3e6}}  & \multicolumn{1}{|r|}{\num{7.8e2}} \\ \cline{2-8}
  
   & \multicolumn{1}{|c}{Visibility} &  \multicolumn{1}{|r}{5} & \multicolumn{1}{|r}{4}  & \multicolumn{1}{|r}{395}  & \multicolumn{1}{|r}{395} & \multicolumn{1}{|r}{\num{1.3e6}}  & \multicolumn{1}{|r|}{\num{7.3e2}} \\ \cline{2-8}
    
    & \multicolumn{1}{|c}{WindDirection} &  \multicolumn{1}{|r}{5} & \multicolumn{1}{|r}{4} & \multicolumn{1}{|r}{350}  & \multicolumn{1}{|r}{350} & \multicolumn{1}{|r}{\num{1.3e6}}  & \multicolumn{1}{|r|}{\num{7.5e2}} \\ \cline{2-8}
     
     & \multicolumn{1}{|c}{WindSpeed} &  \multicolumn{1}{|r}{5} & \multicolumn{1}{|r}{4} & \multicolumn{1}{|r}{410}  & \multicolumn{1}{|r}{410}  & \multicolumn{1}{|r}{\num{1.3e6}}  & \multicolumn{1}{|r|}{\num{7.6e2}} \\ \hline
     
   \multicolumn{2}{|c}{Measurement} & \multicolumn{1}{|r}{1} & \multicolumn{1}{|r}{1}  & \multicolumn{1}{|r}{1,907} & \multicolumn{1}{|r}{1,907} & \multicolumn{1}{|r}{\num{5.3e5}} & \multicolumn{1}{|r|}{\num{1.9e2}} \\ \hline

  \end{tabular}}
 }
  \label{tab:gspanVSfsp}
\end{table*}

\subsection{Effectiveness of Frequent Star Patterns Detection Approach}
To answer the research questions {\bf RQ2} and {\bf RQ3},  we compute the values of Formula~\ref{eq:minProblem} for all the sets of properties given in Table~\ref{tab:attributes} for the \textit{Observation} and \textit{Measurement} classes, respectively, in the three RDF datasets. The computed formula values for the \textit{Observation} and \textit{Measurement} classes are shown in Table~\ref{tab:formulaValue}. Moreover, we compute the size of the original and factorized RDF graphs, in terms of nodes and edges in the RDF graphs. The formula values are computed for the sets of properties that contain only one property in the set, as well as the factorization is performed using these sets of properties to illustrate the association between the formula values and the savings obtained in the factorized graphs. Table~\ref{tab:formulaValue} shows that the set $A5$ of properties in the \textit{Observation} class generates the smaller values $D1=4,142,727$, $D1D2=15,756,888$, and $D1D2D3=334,898,603$ for the Formula~\ref{eq:minProblem}, than all the other sets $A1$, $A2$, $A3$, $A4$, $A6$, and $A7$. A smaller formula value for $A5$ indicates that the RDF graphs encapsulate a minimum number of star patterns, over the properties in the set $A5$ such that a large number of entities of the \textit{Observation} class match these star patterns. Therefore, replacing these star patterns with the compact RDF molecules during the factorization reduces the size of the RDF graphs. Figure~\ref{fig:ObsNNNE} presents the number of \textit{Observation} nodes $NN$ and the labeled edges $NLE$  in the original and factorized RDF datasets $D1$, $D1D2$, and $D1D2D3$. The results show that factorization of the \textit{Observation} class over the set $A5$ of properties reduces the sum of the number of observation nodes and the labeled edges in the factorized RDF graphs by up to $37\%$. 
On contrary, a large formula value for $A4$ in datasets $D1=4,142,727$, $D1D2=15,756,888$, and $D1D2D3=334,898,603$, than the other sets $A1$, $A2$, $A3$, $A5$, $A6$, and $A7$ indicates that a large number of star patterns over the properties in $A4$ exist in the RDF graphs and a small number of entities of the \textit{Observation} class match these star patterns. Figure~\ref{fig:ObsNNNE} depicts an increase in the number of \textit{Observation} nodes $NN$ and the labeled edges $NLE$  in the factorized RDF datasets $D1$, $D1D2$, and $D1D2D3$ after factorizing over the properties in $A4$. Similarly, the results for $A1$, $A2$, $A3$, $A6$, and $A7$ in Table~\ref{tab:formulaValue} and Figure~\ref{fig:ObsNNNE} clearly show that the higher the formula value for a set of properties increases the number of nodes and edges in the factorized RDF graphs by factorizing using the properties in the corresponding  set. 
In case of the \textit{Measurement} class Table~\ref{tab:formulaValue} shows smaller  formula values for the set $A8$ of properties i.e., $D1=28,491$, $D1D2=34,554$, and $D1D2D3=40,302$, than the other sets $A9$ and $A10$. 
Figure~\ref{fig:measNNNE} reports the sum of the nodes and the labeled edges representing measurements in the original and factorized RDF datasets $D1$, $D1D2$, and $D1D2D3$. 
The sum of the nodes and the labled edges of the measurements are reduced up to $60\%$ in all the factorized RDF graphs by factorizing over the properties in $A8$. Furthermore, the higher formula values for the sets $A9$ and $A10$ indicate less savings after factorization compared to the set $A8$. The number of nodes and edges in the factorized RDF graphs by factorizing over the properties in sets $A9$ and $A10$ in Figure~\ref{fig:measNNNE} are higher than $A8$.
These results show that the different combinations of class properties impact the factorization of RDF graphs and the proposed frequent star patterns detection techniques are able to detect the set of properties involved in the generation of frequent star patterns. Moreover, our techniques are able to anticipate the best set of properties, answering thus, research questions {\bf RQ2} and {\bf RQ3}.

\begin{table*}[b!]
   \caption{{\bf Values Computed for Formula~\ref{eq:minProblem}.} The sets of properties in Table~\ref{tab:attributes} for the \textit{Observation} and \textit{Measurement} (Meas.) classes, respectively, are used to compute the Formula~\ref{eq:minProblem} values over the RDF datasets $D1$, $D1D2$, and $D1D2D3$. The minimum formula values for the \textit{Observation} and \textit{Measurement} classes and the corresponding sets $A5$ and $A8$, respectively, of properties are highlighted in bold. The smaller formula values for $A5$ and $A8$ in the \textit{Observation} and \textit{Measurement} classes, respectively, indicate the maximum savings after factorizing the RDF graphs over the properties in $A5$ and $A8$, as shown in Figure~\ref{fig:ne} and Table~\ref{tab:NELFactorizedObsMeas}.}
  \centering
  \subfloat{
  \resizebox{0.7\textwidth}{!}{%
  \begin{tabular}{|lllll|}
  \cline{3-5}
\multicolumn{2}{c}{}  & \multicolumn{3}{|c|}{$\mathbf{\#Edges(SP,C,G)}$}  \\ \cline{2-5}

  \multicolumn{1}{c}{} & \multicolumn{1}{|c}{\bf{SID}} & \multicolumn{1}{|C{2cm}}{\bf{D1}} &  \multicolumn{1}{|C{2cm}}{\bf{D1D2}} & \multicolumn{1}{|c|}{\bf{D1D2D3}} \\ \hline
  
\multirow{7}{*}{\rotatebox[origin=c]{90}{\bf{Observation}}}& \multicolumn{1}{|c}{A1} & \multicolumn{1}{|r}{12,071,185} & \multicolumn{1}{|r}{46,643,440} & \multicolumn{1}{|r|}{103,815,183} \\ \cline{2-5} 

& \multicolumn{1}{|c}{A2} & \multicolumn{1}{|r}{12,090,195} & \multicolumn{1}{|r}{46,687,690} & \multicolumn{1}{|r|}{103,891,717} \\ \cline{2-5}

& \multicolumn{1}{|c}{A3} & \multicolumn{1}{|r}{8,111,623} & \multicolumn{1}{|r}{31,205,888} & \multicolumn{1}{|r|}{69,358,875} \\ \cline{2-5}

& \multicolumn{1}{|c}{A4} & \multicolumn{1}{|r}{20,118,595} & \multicolumn{1}{|r}{78,698,580} & \multicolumn{1}{|r|}{174,865,870} \\ \cline{2-5}

& \multicolumn{1}{|c}{\bf{A5}} & \multicolumn{1}{|r}{\bf{4,142,727}} & \multicolumn{1}{|r}{\bf{15,756,888}} & \multicolumn{1}{|r|}{\bf{34,898,603}} \\ \cline{2-5}

& \multicolumn{1}{|c}{A6} & \multicolumn{1}{|r}{8,097,964} & \multicolumn{1}{|r}{31,245,605} & \multicolumn{1}{|r|}{69,474,786} \\ \cline{2-5}

& \multicolumn{1}{|c}{A7} & \multicolumn{1}{|r}{15,784,707} & \multicolumn{1}{|r}{61,406,644} & \multicolumn{1}{|r|}{135,902,747} \\ \hline

\multirow{3}{*}{\rotatebox[origin=c]{90}{\bf{Meas.}}}& \multicolumn{1}{|c}{\bf{A8}} & \multicolumn{1}{|r}{\bf{28,491}} & \multicolumn{1}{|r}{\bf{34,554}} & \multicolumn{1}{|r|}{\bf{40,302}}\\ \cline{2-5}

& \multicolumn{1}{|c}{A9} & \multicolumn{1}{|r}{4,037,067} & \multicolumn{1}{|r}{15,563,838} & \multicolumn{1}{|r|}{34,623,579} \\ \cline{2-5}

& \multicolumn{1}{|c}{A10} & \multicolumn{1}{|r}{4,023,731} & \multicolumn{1}{|r}{15,547,816} & \multicolumn{1}{|r|}{34,605,063} \\ \hline
 \end{tabular}}

 }
  \label{tab:formulaValue}
\end{table*}

\begin{figure*}[h!]
\centering
     \subfloat[\# of Observation Nodes $NN$ and Edges $NLE$]{
      \includegraphics[width=.5\textwidth]{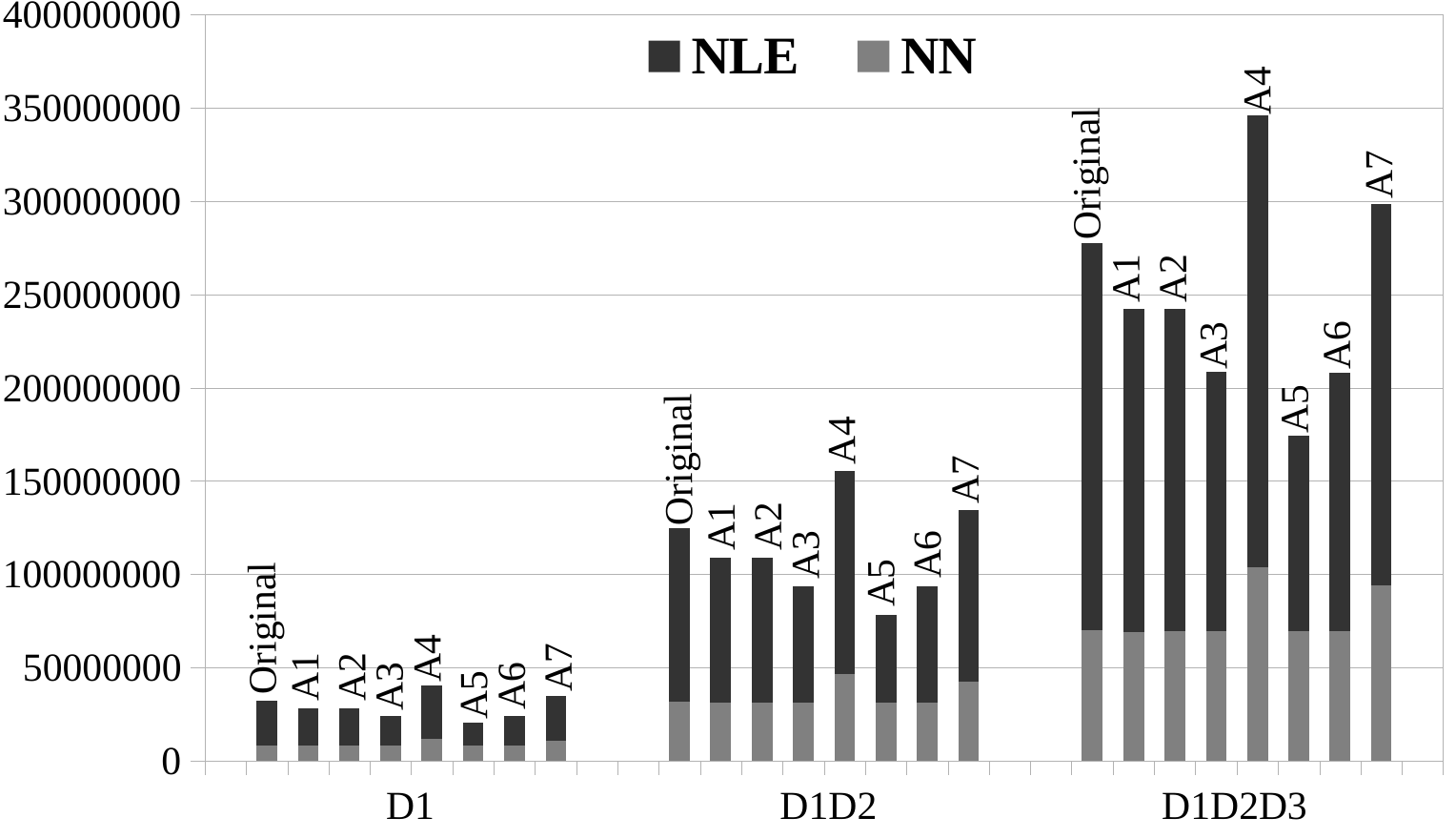}
      \label{fig:ObsNNNE}}
    \subfloat[\# of Measurement nodes $NN$ and edges $NLE$]{
      \includegraphics[width=0.51\textwidth]{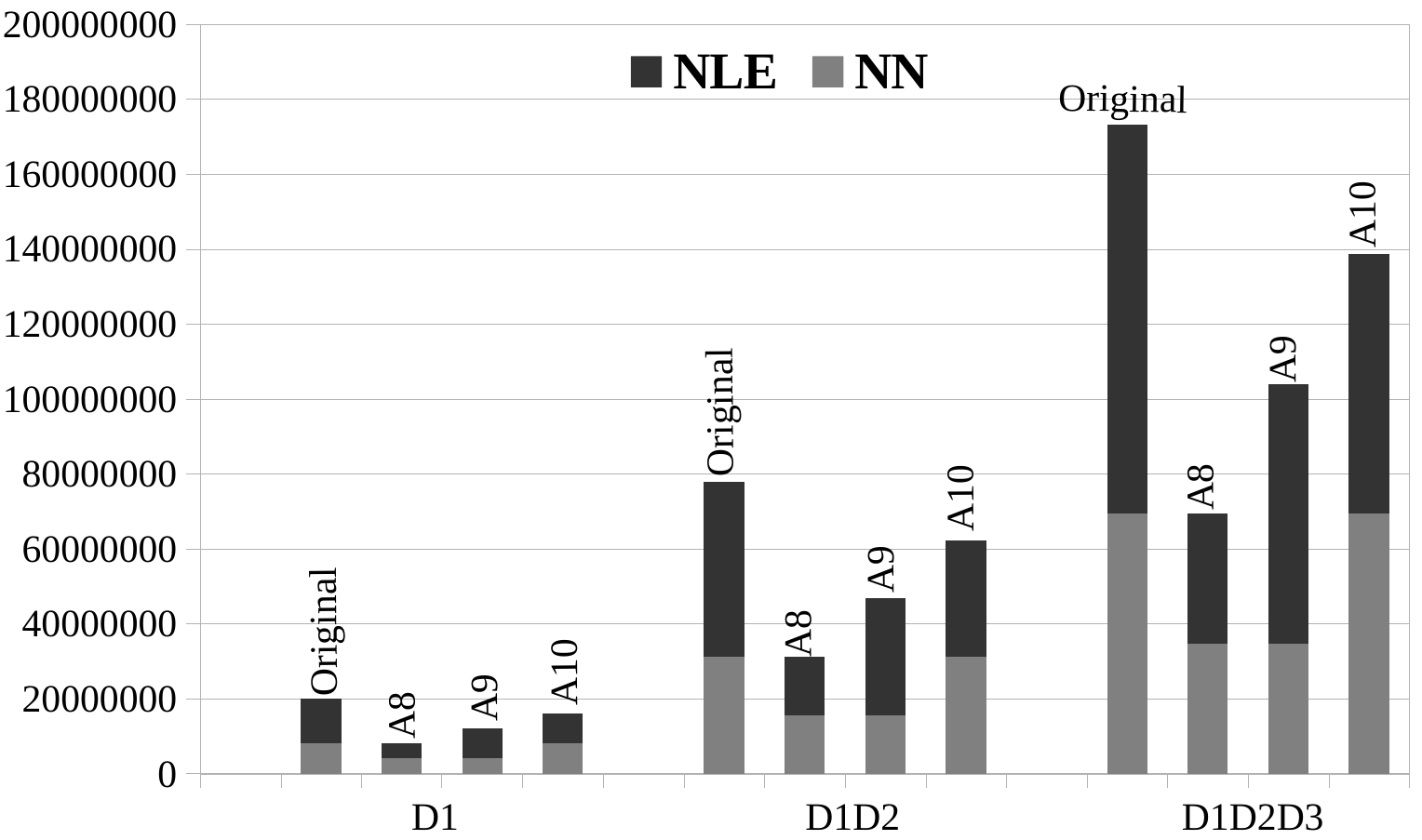}
      \label{fig:measNNNE}}
   \caption{{\bf Nodes and Labeled Edges}. The number of nodes \textit{NN} and labeled edges \textit{NLE} before and after factorization of the RDF datasets. (a) The number of nodes $NN$ and labeled edges $NLE$ representing observations in the RDF datasets; (b) The number of nodes $NN$ and labeled edges $NLE$ representing measurements.}
\label{fig:ne}
\end{figure*}

\begin{table*}[b!]
   \caption{{\bf Percentage Savings in Labeled Edges after Factorization.} Savings \textit{\%Savings} in the number of Labeled Edges \textit{NLE($G'$)} after factorization of the RDF datasets using the sets of properties in Observation and Measurement classes.}
  \centering
  \subfloat{
  \resizebox{\textwidth}{!}{%
  \begin{tabular}{|llllllll|}
 \cline{3-8}
  \multicolumn{1}{c}{} & &\multicolumn{2}{|c}{\bf{D1}} &  \multicolumn{2}{|c}{\bf{D1D2}} & \multicolumn{2}{|c|}{\bf{D1D2D3}}  \\ \hline
 
   \multirow{7}{*}{\rotatebox[origin=c]{90}{\bf{Observation}}}&\multicolumn{1}{|c}{\bf{SID}} & \multicolumn{1}{|c}{\bf{NLE($G'$)}} & \multicolumn{1}{|c}{\bf{\%Savings}} & \multicolumn{1}{|c}{\bf{NLE($G'$)}} & \multicolumn{1}{|c}{\bf{\%Savings}} & \multicolumn{1}{|c}{\bf{NLE($G'$)}} & \multicolumn{1}{|c|}{\bf{\%Savings}}  \\ \cline{2-8}
  
& \multicolumn{1}{|c}{A1} & \multicolumn{1}{|r}{20,125,493} & \multicolumn{1}{|r}{16.64}  & \multicolumn{1}{|r}{77,745,918} & \multicolumn{1}{|r}{16.66} & \multicolumn{1}{|r}{173,032,155} & \multicolumn{1}{|r|}{16.66} \\ \cline{2-8}
 
 &  \multicolumn{1}{|c}{A2} &    \multicolumn{1}{|r}{20,144,503} & \multicolumn{1}{|r}{16.56}  &  \multicolumn{1}{|r}{77,790,168}   & \multicolumn{1}{|r}{16.61} &  \multicolumn{1}{|r}{173,108,689}   & \multicolumn{1}{|r|}{16.63}   \\ \cline{2-8}
 
  & \multicolumn{1}{|c}{A3} &    \multicolumn{1}{|r}{16,226,021} & \multicolumn{1}{|r}{32.79}  &   \multicolumn{1}{|r}{62,546,938}  & \multicolumn{1}{|r}{32.95} &  \multicolumn{1}{|r}{139,064,503}   & \multicolumn{1}{|r|}{33.02} \\ \cline{2-8}
 
   &\multicolumn{1}{|c}{A4} &  \multicolumn{1}{|r}{28,170,155} & \multicolumn{1}{|r}{-16.68} & \multicolumn{1}{|r}{108,838,750}  & \multicolumn{1}{|r}{-16.67}  &  \multicolumn{1}{|r}{242,239,479} & \multicolumn{1}{|r|}{-16.67} \\ \cline{2-8}
 
  &\multicolumn{1}{|c}{\bf A5} &  \multicolumn{1}{|r}{\bf 12,277,576} & \multicolumn{1}{|r}{\bf 49.14} & \multicolumn{1}{|r}{\bf 47,175,356}  & \multicolumn{1}{|r}{\bf 49.43} & \multicolumn{1}{|r}{\bf 104,786,128} & \multicolumn{1}{|r|}{\bf 49.53} \\ \cline{2-8}
 
  &\multicolumn{1}{|c}{A6} &  \multicolumn{1}{|r}{16,150,898} & \multicolumn{1}{|r}{33.10}  & \multicolumn{1}{|r}{62,317,489}  & \multicolumn{1}{|r}{33.20} & \multicolumn{1}{|r}{138,639,234}  & \multicolumn{1}{|r|}{33.23} \\ \cline{2-8}
 
  &\multicolumn{1}{|c}{A7} &  \multicolumn{1}{|r}{23,837,352} & \multicolumn{1}{|r}{1.26}  &  \multicolumn{1}{|r}{92,088,523} & \multicolumn{1}{|r}{1.28}  & \multicolumn{1}{|r}{204,304,156}  & \multicolumn{1}{|r|}{1.60} \\ \hline
  
  \multirow{3}{*}{\rotatebox[origin=c]{90}{\bf{Meas.}}}& \multicolumn{1}{|c}{\bf A8} & \multicolumn{1}{|r}{\bf 4,059,738} & \multicolumn{1}{|r}{\bf 66.37} & \multicolumn{1}{|r}{\bf 15,599,469} & \multicolumn{1}{|r}{\bf 66.56} & \multicolumn{1}{|r}{\bf 34,716,176} & \multicolumn{1}{|r|}{\bf 66.56} \\ \cline{2-8}
 
 & \multicolumn{1}{|c}{A9} &  \multicolumn{1}{|r}{8,069,688}  & \multicolumn{1}{|r}{33.15} &  \multicolumn{1}{|r}{31,130,127}  & \multicolumn{1}{|r}{33.26} &  \multicolumn{1}{|r}{69,300,827}  & \multicolumn{1}{|r|}{33.25}  \\ \cline{2-8}
 
 & \multicolumn{1}{|c}{A10} &  \multicolumn{1}{|r}{8,056,352} & \multicolumn{1}{|r}{33.26} & \multicolumn{1}{|r}{31,114,105}  & \multicolumn{1}{|r}{33.29} & \multicolumn{1}{|r}{69,282,311}  & \multicolumn{1}{|r|}{33.26}  \\ \hline
  
 \end{tabular}}
  \label{tab:NELFactorizedObs}}

  \label{tab:NELFactorizedObsMeas}
\end{table*}


\subsection{Effectiveness of RDF  Graph Factorization}
We factorize the gradually increasing RDF datasets  $D1$, $D1D2$, and $D1D2D3$ over the  \textit{Observation} and \textit{Measurement} classes using the properties in the sets of properties given in Table~\ref{tab:attributes}.
The percentage savings are computed in terms of labeled edges for the observations and measurements in the RDF datasets after factorization.
Table~\ref{tab:NELOriginal} presents the number of edges $NLE(G)$ in the \textit{Observation} and \textit{Measurement} classes in the original RDF datasets $D1$, $D1D2$, and $D1D2D3$.
Table~\ref{tab:NELFactorizedObsMeas} presents the number of labeled edges $NLE(G')$ and the percentage savings $\%savings$ after factorization of the \textit{Observation} and \textit{Measurement} classes.
The highest savings $49.14\%$, $49.43\%$, and $49.53\%$ in $NLE(G')$ for observations after factorizing $D1$, $D1D2$, and $D1D2D3$ over the properties in $A5$, shows that the number of frequent star patterns over the properties in $A5$ are reduce by replacing them with the corresponding compact RDF molecules.
On the other hand, the set $A4$ of properties gives negative values of percentage savings $\%Savings$, $-16.68\%$, for the RDF dataset $D1$, and $-16.67\%$, for the RDF datasets $D1D2$ and $D1D2D3$, indicating an increase in the number of labeled edges after the factorization of the RDF datasets.
Similarly, for measurements, the positive values $66.37\%$ of percentage savings after factorizing $D1$, and $66.56\%$ for $D1D2$ and $D1D2D3$ over $A8$ indicate a decrease in the number of labeled edges after factorization. Furthermore, the percentage savings in the set $A8$ of properties are higher than in $A9$ and $A10$.
These results allow us to positively answer research question {\bf RQ4}.

\section{Conclusions and Future Work}
\label{sec:conclusion}
This article presents computational methods to identify frequent star patterns and to generate a \emph{factorized RDF graph}, with a minimized number of frequent star patterns. A frequent star pattern contains class entities linked to the objects or other resources using labeled edges annotated with properties in the class. These frequent star patterns introduce redundancy in terms of edges and nodes. Our proposed computational methods implement the frequent star pattern detection algorithm based on search space pruning techniques to identify the classes and properties involved in frequent star patterns. Furthermore, the proposed factorization techniques generate compact representation of RDF graphs, \emph{factorized RDF graph}, by replacing a frequent star pattern with a compact RDF molecule, composed of a surrogate entity connected to the object in the frequent star pattern using the labeled edges annotated with relevant properties.
We empirically study the effectiveness of the frequent star pattern detection algorithm to identify class and properties involved in the frequent star pattern. Furthermore, we evaluate the impact of the factorization techniques over the gradually increasing RDF graphs size and different combinations of class properties. Experimental results suggest that the proposed computational methods successfully identify the class properties involved in the frequent star patterns and remove redundancy caused by these frequent star patterns. For the best set of properties, identified by the frequent star pattern detection algorithm, the RDF graph size is reduced by up to $66.56\%$.
Our work broadens the repertoire of techniques for representing and storing knowledge graphs by providing RDF graph compression techniques which exploit the semantics encoded in the data; these techniques  generate compact representations of RDF graphs to help improving query processing over RDF graphs without requiring a customized engine.
Our work contributes to the crucial knowledge graph representation and provides the basics for further development of the efficient processing techniques over the compact knowledge graphs.
In future, we will exploit parallel processing to efficiently find frequent star patterns. 
 
\section*{Acknowledgments}
Farah Karim is supported by the German Academic Exchange Service (DAAD); this work is partially funded by the EU H2020 project IASiS (GA No.727658).

\bibliographystyle{abbrv}
\bibliography{sigproc} 

\end{document}